\DeclareMathOperator{\Tr}{Tr}
\renewcommand{\Im}{\mathrm{Im}}
\newcommand{\de}{\mathrm{d}}
\newcommand{\braket}[2]{ \langle #1 , #2 \rangle }  
\newcommand{\ketbra}[2]{\lvert #1\rangle\langle #2\rvert}
\newcommand{\norm}[1]{\bigl\lVert #1 \bigr\rVert} 
\newcommand{\ass}[1]{\left\lvert #1 \right\rvert} 
\newcommand{\ide}[1]{\mathrm{d}#1 \,} 
\newcommand{\W}{\widetilde{W}}
\newcommand{\C}{\mathds{C}}
\newcommand{\nnorm}[1]{\bigl\lvert\mspace{-1.65mu}\bigl\lVert #1 \bigr\rVert\mspace{-1.65mu}\bigr\rvert}
\theoremstyle{plain}
\newtheorem{thm}{Theorem}
\newtheorem*{assertion*}{Assertion}
\newtheorem{proposition}{Proposition}
\newtheorem{lemma}{Lemma}
\newtheorem*{lemma*}{Lemma}
\newtheorem*{corollary*}{Corollary}
\theoremstyle{definition}
\newtheorem{definition}{Definition}
\newtheorem*{definition*}{Definition}
\newtheorem*{definitions*}{Definitions}
\theoremstyle{remark}
\newtheorem{remark}{Remark}
\newtheorem*{remark*}{Remark}
\newtheorem*{remarks*}{Remarks}
\begin{document}
\date{\today}
\title{Mean field limit of bosonic systems in partially factorized states and their linear combinations.}
\author{Marco Falconi}
\affiliation{Dipartimento di Matematica, Università di Bologna\\Piazza di Porta San Donato 5 - 40126 Bologna, Italy}
\email[]{m.falconi@unibo.it}
\begin{abstract}
We study the mean field limit of one-particle reduced density matrices, for a bosonic system in an initial state with a fixed number of particles, only a fraction of which occupies the same state, and for linear combinations of such states. In the mean field limit, the time-evolved reduced density matrix is proved to converge: in trace norm, towards a rank one projection (on the state solution of Hartree equation) for a single state; in Hilbert-Schmidt norm towards a mixed state, combination of projections on different solutions (corresponding to each initial datum), for states that are a linear superposition.
\end{abstract}
\maketitle

\section{Introduction.}
\label{sec:introduction}

The mathematics of mean field limit of quantum systems has been widely investigated in the late 35 years. The first to put on a sound mathematical basis the concept of mean field limit of many-boson systems were \citet{MR530915}, developing an idea by \citet{MR0332046}. Actually, in their work they performed the classical limit $h\to 0$, and used the formalism of Fock space: they showed that, in the limit, bounded functions of annihilation and creation operators converge in some sense to bounded functions of the solution of classical equation corresponding to the system (Hartree equation); also, the quantum evolution between $h$-dependent coherent states converges when $h\to 0$ to the evolution of quantum fluctuations around the classical solution. Their result was extended by \citet{MR2530155}: they studied the convergence of reduced density matrices of many-bosons systems in the mean field limit, and also provided a bound on the rate of convergence. To do that they proved the convergence of normal ordered products of time evolved creation and annihilation operators (averaged on initial states that depend suitably on the number of particles). Further improvements were made by \citet{2011JMP} and \citet{2011arXiv1103.0948C}. The Fock space method has also been used to study the mean field limit of other bosonic systems, such as the Nelson model with cut off, or with a different scaling of the potential \citep[see][for further details]{MR2205462,falconi:012303,2012arXiv1208.0373B,jioonlee:14213}.

Another method has been widely used to study mean field limit of quantum systems, and it is based on a hierarchy of equations called BBGKY. This method has been very successful but has some limitations: in particular, due to its abstract argument, it does not give information on the rate of convergence of reduced density matrices \citep[see][for a review of BBGKY methods, and detailed references]{2013arXiv1301.5494G}.
Recently, yet another simple method has been developed by Pickl \citep{MR2657816,MR2821235}: it avoids the technicalities of BBGKY hierarchies and to introduce the formalism of Fock spaces (in particular the use of Weyl operators).

To sum up, we review briefly the results of the works cited above. Let $\varphi$ be a one-particle normalized state; then $\varphi^{\otimes_n}$ is a state with $n$ particles, all in the same state $\varphi$, and $C(\sqrt{n}\varphi)\Omega$ ($C$ is the Weyl operator, $\Omega$ the vacuum of Fock space) a coherent state with an average number of particles $n$. Now let $\Tr_1\rho_{\varphi^{\otimes_n}}(t)$ and $\Tr_1\rho_{C(\sqrt{n}\varphi)\Omega}(t)$ be the corresponding one-particle reduced density matrices evolved in time by quantum dynamics. In the mean field limit $n\to\infty$, the following convergences in trace norm are proved, for suitable bosonic systems (\citet{MR2821235} proves convergence in operator norm):
\begin{equation}\label{eq:8}
  \Tr\ass{\Tr_1\rho_{\varphi^{\otimes_n}}(t)-\ketbra{\varphi_t}{\varphi_t}}\overset{n\to\infty}{\longrightarrow} 0\; ,\;\Tr\ass{\Tr_1\rho_{C(\sqrt{n}\varphi)\Omega}(t)-\ketbra{\varphi_t}{\varphi_t}}\overset{n\to\infty}{\longrightarrow} 0
\end{equation}
where $\varphi_t$ is the solution of the classical equation with initial datum $\varphi$. A bound on the rate of convergence of order $n^{1/2}$ \citep[e.g.][]{MR2530155,MR2657816,2012arXiv1208.0373B} is given; for some systems, using quantum fluctuations, it can be improved to order $n^{-1}$ \citep[e.g.][]{2011arXiv1103.0948C,falconi:012303}.

\subsection{Basic notions on symmetric Fock spaces.}
\label{sec:main-results-this}

We would like to extend the convergence results above to other types of states (that include $\varphi^{\otimes_n}$ as a particular case). We will use the Fock space method, so we recall some basic concepts of Fock spaces.

Let $\mathscr{H}$ be a separable Hilbert space, with scalar product
\begin{equation*}
  \braket{f}{g}_{\mathscr{H}}=\int\ide{x}\bar{f}(x)g(x)\; .
\end{equation*}
From $\mathscr{H}$, construct the symmetric Fock space $\mathscr{F}_s(\mathscr{H})$ as follows. Define:
\begin{gather*}
   \mathscr{H}_0=\mathds{C}\; ,\;\mathscr{H}_n:=S_n\Bigl(\mathscr{H}\otimes\dotsm\otimes\mathscr{H}\Bigr)\\
\intertext{with $S_n$ orthogonal symmetrizer on the $n$-th fold tensor product of $\mathscr{H}$, i.e.}
 \mathscr{H}_n=\Bigl\{\phi_n(x_1,\dotsc,x_n)\Bigl |\text{ $\phi_n$ is invariant for any permutation of variables}\Bigr\}\; .\\
\intertext{Then}
\mathscr{F}_s(\mathscr{H}):=\bigoplus_{n=0}^\infty \mathscr{H}_n\; .
\end{gather*}
Let $\phi=(\phi_0,\phi_1,\dotsc,\phi_n,\dotsc)$ be a vector of $\mathscr{F}_s(\mathscr{H})$. Then $\mathscr{F}_s(\mathscr{H})$, endowed with the norm
\begin{equation*}
  \norm{\phi}:=\Bigl(\sum_{n=0}^\infty\norm{\phi_n}^2_{\mathscr{H}_n} \Bigr)^{1/2}
\end{equation*}
is a (separable) Hilbert space. The vector $\Omega=(1,0,0,\dotsc)$ plays a special role in Fock spaces and it is often called the vacuum.

The basic operators of Fock spaces are the annihilation and creation operators. They are the adjoint of one another and are defined as following. Let $f\in\mathscr{H}$, $\phi=(\phi_0,\phi_1,\dotsc,\phi_n,\dotsc)$; then
  \begin{align*}
    a(f)&=\int\ide{x}f(x)a(x)\\
    a^*(f)&=\int\ide{x}f(x)a^*(x)\; ;
  \end{align*}
with
\begin{align*}
  (a(x)\phi)_n(x_1,\dotsc,x_n)&=\sqrt{n+1}\phi_{n+1}(x,x_1,\dotsc,x_n)\\
  (a^*(x)\phi)_n(x_1,\dotsc,x_n)&=\frac{1}{\sqrt{n}}\sum_{i=1}^n\delta(x-x_j)\phi_{n-1}(x_1,\dotsc,\hat{x}_j,\dotsc,x_n)\; ,
\end{align*}
where $\hat{x}_i$ means such variable is omitted. The $a^\#$ satisfy the following commutation properties:
\begin{equation*}
  [a(x),a^*(x')]=\delta(x-x')\; ,\; [a(x),a(x')]=[a^*(x),a^*(x')]=0\; .
\end{equation*}

From annihilation and creation operators we can construct an unitary operator, called the Weyl operator that plays a crucial role in our treatment of mean field limits. Let $\alpha\in\mathscr{H}$; then the Weyl operator, denoted by $C(\alpha)$, is defined as follows:
  \begin{equation*}
    C(\alpha)=\exp\bigl(a^*(\alpha)-a(\bar{\alpha})\bigr)\; .
  \end{equation*}
For all $\alpha\in\mathscr{H}$, $C(\alpha)$ is unitary on $\mathscr{F}_s(\mathscr{H})$. In addition to its unitarity, we will use the following properties:
\begin{enumerate}[i.]
\item\label{item:1} $C^*(\alpha)a(x)C(\alpha)=a(x)+\alpha(x)$, and therefore $C^*(\alpha)a^*(x)C(\alpha)=a^*(x)+\bar{\alpha}(x)$;
\item\label{item:2} $C(\alpha)C(\beta)=C(\alpha+\beta)e^{-i\Im \braket{\alpha}{\beta}}$;
\item\label{item:3} $C(\alpha)=e^{-\lVert\alpha\rVert^2/2}\exp\{a^*(\alpha)\}\exp\{-a(\alpha)\}$.
\end{enumerate}

Another useful class of operators are the ones defined by the so-called second quantization \citep[see e.g.][Chapter X.7]{MR0493420}. Let $A$ be a self-adjoint operator on $\mathscr{H}$, with domain of essential self-adjointness $D$. We define the second quantization of $A$, denoted by $\de\Gamma(A)$, as the following operator of $\mathscr{F}_s(\mathscr{H})$: let $\phi=(\phi_0,\phi_1,\dotsc,\phi_n,\dotsc)\in\mathscr{F}_s(\mathscr{H})$, then
  \begin{equation*}
    (\de\Gamma(A)\phi)_n(x_1,\dotsc,x_n)=\sum_{i=1}^n A(x_i)\phi_n(x_1,\dotsc,x_n)\; ,
  \end{equation*}
where $A(x_i)$ denotes the operator $A$ acting on the subspace of $\mathscr{H}_n$ corresponding to the $i$-th variable. $\de\Gamma(A)$ is essentially self-adjoint on the domain
\begin{equation*}
  D_A=\Bigl\{ \phi\in\mathscr{F}_s(\mathscr{H})\Bigl | \exists \tilde{n}, \forall n\geq\tilde{n} \; \phi_n=0\; ;\; \forall m\in\mathds{N} \; \phi_m\in D\otimes\dotsm \otimes D \Bigr\}\; .
\end{equation*}

The last notion we introduce is that of number operator $N$. For all $n\in\mathds{N}$, every $\phi_n\in\mathscr{H}_n$ is an eigenvector of $N$ with eigenvalue $n$. Precisely, let $\phi=(\phi_0,\phi_1,\dotsc,\phi_n,\dotsc)\in\mathscr{F}_s(\mathscr{H})$, then $N$ is defined as
  \begin{equation*}
    (N\phi_n)_n(x_1,\dotsc,x_n)=n\phi_n(x_1,\dotsc,x_n)\; .
  \end{equation*}
$N$ is a self-adjoint operator with domain
\begin{equation*}
  D(N)=\Bigl\{\phi\in\mathscr{F}_s(\mathscr{H}),\sum_{n=0}^\infty n^2\norm{\phi_n}^2<\infty\Bigr\}\; ;
\end{equation*}
it satisfies the following properties:
\begin{enumerate}[i.]
\item $N=\de\Gamma(1)=\int\ide{x}a^*(x)a(x)$.
\item Let $f\in\mathscr{H}$; then for all $\phi\in D(N^{1/2})$:
  \begin{equation*}
    \norm{a^\#(f)\phi}\leq \norm{f}_{\mathscr{H}}\norm{(N+1)^{1/2}\phi}\; .
  \end{equation*}
Furthermore, let $E_N(\lambda)$ be the spectral family of the operator $N$. Then for all $E$-measurable operator valued function $g$, we have
\begin{gather*}
  g(N)a(x)=a(x)g(N-1)\\
g(N)a^*(x)=a^*(x)g(N+1)\; ;
\end{gather*}
on suitable domains.
\item\label{item:7} Let $f\in\mathscr{H}\otimes\mathscr{H}$; then for all $\phi\in D(N)$:
  \begin{align*}
    \norm{\int\de{x}\ide{y}f(x,y)a^\#(x)a^\#(y)\phi}&\leq \norm{f}_{\mathscr{H}\otimes\mathscr{H}}\norm{(N+1)\phi}\\
    \norm{\int\de{x}\ide{y}f(x,y)a^*(x)a(y)\phi}&\leq \norm{f}_{\mathscr{H}\otimes\mathscr{H}}\norm{N\phi}\; .
  \end{align*}
\end{enumerate}

\subsection{Main results.}
\label{sec:quantum-system}

The quantum system we would like to study describes $n$ non-relativistic interacting bosons; its dynamics is dictated by the following Hamiltonian of $L^2_s(\mathds{R}^{3n})$: let $V$ be a real and symmetric function of $\mathds{R}^3$ (other assumptions on the potential will be specified later), then
\begin{equation}\label{eq:10}
  H_n=H_{0n}+V_n=\sum_{j=1}^n-\Delta_{x_j}+\frac{1}{n}\sum_{i<j}^n V(x_i-x_j)\; .
\end{equation}
This operator can be written in the language of second quantization on $\mathscr{F}_s(L^2(\mathds{R}^3))$ as:
\begin{equation}\label{eq:2}
  H=\int\ide{x}(\nabla a)^*(x)\nabla a(x)+\frac{1}{2n}\int\de{x}\ide{y} V(x-y)a^*(x)a^*(y)a(x)a(y)\; ;
\end{equation}
$H_n$ and $H$ agree on $\mathscr{H}_n$. Let now $\phi\in\mathscr{H}_n$ with $\norm{\phi}_{\mathscr{H}_n}=1$; we denote by $\phi(t)$ the time evolution of $\phi$ and by
\begin{equation*}
  \rho_\phi(t)=\ketbra{\phi(t)}{\phi(t)}
\end{equation*}
the corresponding density matrix, with integral kernel
\begin{equation*}
 \rho_\phi(t,x_1,\dotsc,x_n;y_1,\dotsc,y_n)=\phi(t,x_1,\dotsc,x_n)\bar{\phi}(t,y_1,\dotsc,y_n)\; . 
\end{equation*}
Also, we denote by $\Tr_1\rho_{\phi}(t)$ the one-particle reduced density matrix, with integral kernel
\begin{equation*}
  \Tr_1\rho_{\phi}(t,x;y)=\int\de{x_1}\dotsm\ide{x_{n-1}}\phi(t,x,x_1,\dotsc,x_{n-1})\bar{\phi}(t,y,x_1,\dotsc,x_{n-1})\; .
\end{equation*}

In the mean field limit $n\to\infty$, using suitable initial states, the one-particle reduced density matrix is expected to converge in some sense to the solution of Hartree equation:
\begin{equation}\label{eq:4}
  i\partial_t \varphi_t = -\Delta\varphi_t +(V*\ass{\varphi_t}^2)\varphi_t\; .
\end{equation}
As discussed above, such convergence has already been proved for factorized ($\varphi^{\otimes_n}$) and coherent ($C(\sqrt{n}\varphi)\Omega$) states. In this paper we prove that mean field limit convergence can be obtained for a wider class of states. First of all we consider states with $n$ particles, only a fraction of which is factorized in the same state $\varphi$ \citep[see also][where these states are used to construct an isomorphism between $\mathscr{H}_n$ and the truncated Fock space orthogonal to $\varphi$]{2012arXiv1211.2778L}. The precise definition is the following:
\begin{definition}[$\theta_{n,m}$]
  \label{sec:properties-theta_n-m-1}
Let $\mathscr{H}$ be a Hilbert space; $\mathscr{F}_s(\mathscr{H})=\bigoplus_{n=0}^\infty\mathscr{H}_n$ the corresponding symmetric Fock space. We also denote by $\mathscr{D}\subseteq\mathscr{H}$ a subspace of $\mathscr{H}$.

Now let $\varphi\in \mathscr{D}$ such that $\braket{\varphi}{\varphi}_{\mathscr{H}}=1$ and $\psi_m\in\mathscr{H}_m$ such that $\norm{\psi_m}_{\mathscr{H}_m}=1$ and
\begin{equation}
  \label{eq:1}
  \braket{\varphi}{\psi_m}_{\mathscr{H}_1}=\int\ide{x}\bar{\varphi}(x)\psi_m(x,x_1,\dotsc,x_{m-1})=0\; .
\end{equation}
We define
\begin{equation*}
  \theta_{n,m}:= c_{n,m}S_{n}(\varphi^{\otimes_{n-m}}\otimes\psi_m)\in\mathscr{H}_n\; ;
\end{equation*}
where:
\begin{gather*}
  \varphi^{\otimes_j}=\underset{j}{\underbrace{\varphi\otimes\dotsm\otimes\varphi}}\; ,\\
S_j:\underset{j}{\underbrace{\mathscr{H}\otimes\dotsm\otimes\mathscr{H}}}\longrightarrow \mathscr{H}_j\text{ orthogonal projector (symmetrizer),}\\
c_{n,m}=\binom{n}{m}^{1/2}\text{ (such that $\norm{\theta_{n,m}}=1$).}
\end{gather*}
\end{definition}
The convergence result about $\theta$-vectors is formulated in the following theorem:
\begin{thm}\label{sec:main-results-1}
Suppose there exists $D>0$ such that the operator inequality
\begin{equation*}
 V^2(x)\leq D(1-\Delta_x) 
\end{equation*}
is satisfied. Let $\theta_{n,m}$ satisfy definition~\ref{sec:properties-theta_n-m-1} with $\varphi\in H^1(\mathds{R}^3)$. Also, let $\varphi_t$ be the $\mathscr{C}^0(\mathds{R},H^1(\mathds{R}^3))$ solution of Hartree equation with initial datum $\varphi(0)=\varphi$. Then $\forall t \in\mathds{R}$:
  \begin{equation*}
\Tr \ass{\Tr_1\rho_{\theta_{n,m}}(t)-\ketbra{\varphi_t}{\varphi_t}}\leq 2 K_1 e^{K_2\ass{t}}\frac{1}{\sqrt{n}}e^{m/2}(m+1)^7\; ;
  \end{equation*}
where the $K_i$, $i=1,2$, are positive and depend only on $D$ and $\norm{\varphi}_{H^1}$.
\end{thm}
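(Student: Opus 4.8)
The plan is to deduce the trace-norm bound from a single scalar estimate on the number of particles that have left the condensate. Introduce $N_{\varphi_t}:=\de\Gamma(1-\ketbra{\varphi_t}{\varphi_t})$, the second quantization of the projection onto the orthogonal complement of $\varphi_t$; it counts the excitations. For a normalized $n$-particle vector $\phi$ one has the exact identity $1-\braket{\varphi_t}{(\Tr_1\rho_\phi)\varphi_t}=\tfrac1n\langle\phi,N_{\varphi_t}\phi\rangle$, and for any one-particle density matrix $\gamma$ with $0\le\gamma\le 1$, $\Tr\gamma=1$, the elementary block estimate $\Tr\ass{\gamma-\ketbra{\varphi_t}{\varphi_t}}\le C\sqrt{1-\braket{\varphi_t}{\gamma\varphi_t}}$ holds: decompose $\gamma-\ketbra{\varphi_t}{\varphi_t}$ along $\ketbra{\varphi_t}{\varphi_t}$ and its complement, the two diagonal blocks contributing $O(1-\braket{\varphi_t}{\gamma\varphi_t})$ and the rank-one off-diagonal blocks contributing $O(\sqrt{1-\braket{\varphi_t}{\gamma\varphi_t}})$ by Cauchy--Schwarz, the latter dominating. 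With $\phi=\theta_{n,m}(t)$ this reduces the theorem to the $n$-independent bound $\langle\theta_{n,m}(t),N_{\varphi_t}\theta_{n,m}(t)\rangle\le C\,e^{2K_2\ass{t}}\,e^{m}(m+1)^{14}$, whose square root, divided by $\sqrt n$, is exactly the claimed rate. Note that, unlike in the coherent-state case, conjugation by a Weyl operator is of no help here: one checks $\langle\theta_{n,m},C(\sqrt n\varphi)NC^*(\sqrt n\varphi)\theta_{n,m}\rangle=2n$ (the linear terms vanish by particle-number sector orthogonality), so the fluctuation vector $C^*(\sqrt n\varphi)\theta_{n,m}$ still carries $O(n)$ particles; I would therefore work directly with the $n$-particle vector.

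Next I would derive a differential inequality for $M(t):=\langle\theta_{n,m}(t),N_{\varphi_t}\theta_{n,m}(t)\rangle$. Because $H$ commutes with the total number $N$, differentiating produces only the commutator $i[H,a^*(\varphi_t)a(\bar\varphi_t)]$ of the condensate term together with the explicit time derivative coming from $\dot\varphi_t$; inserting the Hartree equation cancels the mean-field contributions and leaves a remainder that is cubic and quartic in the creation/annihilation operators and carries a prefactor $\tfrac1n$ from the interaction. Every occurrence of the potential is dominated using the hypothesis $V^2\le D(1-\Delta)$, which bounds $V$ by the one-particle kinetic energy and hence by $\norm{\varphi_t}_{H^1}$ and by the kinetic part of the state, in combination with the operator estimates for $a^\#(f)$ and for $\int f\,a^\# a^\#$ recalled in the preliminaries. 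Conservation of mass and of the Hartree energy (coercive, again by $V^2\le D(1-\Delta)$) gives $\sup_{s}\norm{\varphi_s}_{H^1}<\infty$, so all constants depend only on $D$ and $\norm{\varphi}_{H^1}$. The decisive feature is that the quartic terms do not close the inequality on $M$ alone: they bring in the higher moments $M_k(t):=\langle\theta_{n,m}(t),(N_{\varphi_t}+1)^k\theta_{n,m}(t)\rangle$.

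I would therefore run Grönwall not on a single moment but on a generating functional of the excitation number, e.g. $G(t):=\langle\theta_{n,m}(t),e^{N_{\varphi_t}/2}\theta_{n,m}(t)\rangle$ (equivalently, a finite chain of the $M_k$ closed by summation of the tail). After reorganizing the error terms so that the two surplus operators $a^\#$ are absorbed into $(N_{\varphi_t}+1)^{1/2}$ factors, one aims at a closed bound $\dot G\le K_2\,G+(\text{terms of order }1/n)$ and hence $G(t)\le e^{K_2\ass{t}}G(0)$. The initial datum is explicit: since $\psi_m$ is orthogonal to $\varphi$ in each variable, $\theta_{n,m}$ is an eigenvector, $N_\varphi\theta_{n,m}=m\,\theta_{n,m}$, whence $G(0)=e^{m/2}$ and $M_k(0)=(m+1)^k$. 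These are the sources of the two $m$-factors in the statement: the $e^{m/2}$ from the generating functional and the polynomial $(m+1)^7$ from the finitely many $(N_{\varphi_t}+1)^{1/2}$ prefactors produced by the $a^\#$-estimates.

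The main obstacle is exactly this closing step: controlling the coupled moments $M_k(t)$ (or the functional $G$) uniformly in $n$ while keeping the $m$-dependence explicit. One must show that the interaction-induced coupling between neighbouring moments is suppressed by the $\tfrac1n$ scaling strongly enough that the exponential rate $K_2$ is independent of both $m$ and $n$, and that the orthogonality $\braket{\varphi}{\psi_m}=0$ is exploited sharply enough that no extra growth in $m$ is generated along the flow. Once this a priori control of the excitations is in hand, the remaining ingredients --- the reduction lemma, the cancellation of leading terms through the Hartree equation, and the domination of $V$ by the kinetic energy --- are routine.
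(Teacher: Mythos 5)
Your route --- a Pickl-style counting of excitations --- is genuinely different from the paper's, and its preparatory steps are correct: the identity $1-\braket{\varphi_t}{\Tr_1\rho_\phi\,\varphi_t}=\frac1n\braket{\phi}{\de\Gamma(q_t)\phi}$ with $q_t=1-\ketbra{\varphi_t}{\varphi_t}$, the block estimate $\Tr\ass{\gamma-\ketbra{\varphi_t}{\varphi_t}}\leq C\sqrt{1-\braket{\varphi_t}{\gamma\varphi_t}}$, and the initial datum $\de\Gamma(q_0)\theta_{n,m}=m\,\theta_{n,m}$ are all fine. But the proposal stops exactly where the proof has to begin: the Gr\"onwall step, which is the entire analytic content of this method, is described rather than executed. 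You neither compute the commutator of $H$ with $\de\Gamma(p_t)$ after insertion of the Hartree equation, nor establish that the resulting error terms are bounded by $C(\alpha+1/n)$ with $C$ depending only on $D$ and $\norm{\varphi}_{H^1}$; you yourself flag this closing as ``the main obstacle''. Worse, the vehicle you propose for it --- propagating the exponential functional $\braket{\theta_{n,m}(t)}{e^{N_{\varphi_t}/2}\theta_{n,m}(t)}$ uniformly in $n$ --- is a much harder (and here unproven) statement than what is needed. In the mean-field scaling the single functional $\alpha(t)=\frac1n\braket{\phi(t)}{\de\Gamma(q_t)\phi(t)}$ closes on itself: in every term of $\dot\alpha$ the potential lands next to at least one projection $p_t$, and $\nnorm{p_t V(x_1-x_2)}\leq\sqrt{D}\,\norm{\varphi_t}_{H^1}$ by the hypothesis $V^2\leq D(1-\Delta)$; this is precisely the Knowles--Pickl mechanism, and no coupled moments $M_k$ ever enter. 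Had you carried this out, you would in fact obtain a \emph{better} bound than the theorem, namely $\Tr\ass{\Tr_1\rho_{\theta_{n,m}}(t)-\ketbra{\varphi_t}{\varphi_t}}\leq Ce^{C\ass{t}}\sqrt{(m+1)/n}$, since $\alpha(0)=m/n$. As it stands, however, the decisive estimate is missing, so the proposal is a plan, not a proof.

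A second, conceptual point: your reason for discarding the Weyl-operator route rests on the wrong quantity, and this matters because that route is exactly the paper's proof. Your computation $\braket{C^*(\sqrt n\varphi)\theta_{n,m}}{N\,C^*(\sqrt n\varphi)\theta_{n,m}}=2n$ is correct, but irrelevant: the paper never needs the conjugated vector to contain few particles. Its key technical input, lemma~\ref{sec:properties-theta_n-m-3}, proved via sharp Laguerre-polynomial estimates, states that $\norm{(N+1)^{-(1/4+\epsilon)}C^*(\sqrt n\varphi)\theta_{n,m}}\leq K(\epsilon)e^{m/2}/d_{n,m}$ with $d_{n,m}\sim(n-m)^{1/4}e^{m/2}$; that is, the conjugated vector spreads its mass over $O(n)$ number sectors with uniformly small amplitudes, so its \emph{negatively weighted} norm is $O(n^{-1/4})$ even though its mean particle number is of order $n$. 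In the transition amplitude $\braket{U(t)\theta_{n,m}}{a^*(x)a(y)U(t)\theta_{n,m}}$ this factor supplies the smallness, while the $(N+1)^\delta$-weighted dynamical bound~\eqref{eq:9} of proposition~\ref{sec:class-quant-dynam-1} is applied only to $W(t,0)\psi_m$, where $\psi_m$ carries exactly $m$ particles --- which is where the factor $e^{m/2}(m+1)^7$ originates, and why the paper requires nothing of $\psi_m$ beyond its particle number. So the obstruction you identified is not an obstruction; recognizing this would have handed you a complete second route, essentially the one the paper takes.
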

\begin{remark*}
  For technical reasons (see lemma~\ref{sec:properties-theta_n-m-3} below), the result above holds only if $m\leq \sqrt{7+3n}-3$. However since we are considering the limit $n\to\infty$, a stricter bound for $m$ has to be imposed if we want convergence to hold. In particular,
  \begin{equation*}
    \underset{n\to\infty}{\text{Tr$-$lim}}\; \Tr_1\rho_{\theta_{n,m}}(t) = \ketbra{\varphi_t}{\varphi_t}
  \end{equation*}
whenever exists $0\leq a<1$ such that, for large $n$,
\begin{equation*}
  m\sim a\ln n\; .
\end{equation*}
\end{remark*}

Also, we study the mean field limit for superpositions of $\varphi^{\otimes_n}$, $\theta_{n,m}$ or $C(\sqrt{n}\varphi)\Omega$ states. Such linear combinations have to satisfy the following definition:
\begin{definition}[$\Phi$, $\Theta$, $\Psi$]\label{sec:main-results-2}
Let $\mathscr{H}$ be a Hilbert space; $\mathscr{F}_s(\mathscr{H})=\bigoplus_{n=0}^\infty\mathscr{H}_n$ the corresponding symmetric Fock space. We also denote by $\mathscr{D}\subseteq\mathscr{H}$ a subspace of $\mathscr{H}$. Furthermore, assume:
\begin{enumerate}[i.]
\item\label{item:4} $(\alpha_i)_{i\in \mathds{N}},(\beta_i)_{i\in \mathds{N}},(\gamma_i)_{i\in \mathds{N}}\in l^1$.
\item\label{item:5} $(\varphi^{(i)})_{i\in\mathds{N}}$ such that $\varphi^{(i)}\in \mathscr{D}$, $\forall i\in\mathds{N}$ and $\sup_{i\in\mathds{N}}\norm{\varphi^{(i)}}_{\mathscr{D}}= M<+\infty$. Also we ask either
  \begin{enumerate}[a)]
  \item\label{item:8} $\norm{\varphi^{(i)}}_{\mathscr{H}}=1$, $\varphi^{(i)}$ and $\varphi^{(j)}$ linearly independent for all $i\neq j$; or
  \item\label{item:9} $\varphi^{(i)}\neq\varphi^{(j)}$ for all $i\neq j$. 
  \end{enumerate}
\item\label{item:6} To each vector of $(\varphi^{(i)})_{i\in\mathds{N}}$ satisfying \ref{item:8} we associate $(\varphi^{(i)})^{\otimes_n}\in\mathscr{H}_n$ and $\theta^{(i)}_{n,m_i}$ satisfying definition~\ref{sec:properties-theta_n-m-1}, such that $m_i\leq m_j$ if $i\leq j$, $m_i\leq m$ for all $i\in\mathds{N}$. To each vector of $(\varphi^{(i)})_{i\in\mathds{N}}$ satisfying \ref{item:9} we associate the Weyl operator $C(\sqrt{n}\varphi^{(i)})$.
\end{enumerate}
Then define
\begin{equation*}
  \Phi=\sum_{i\in\mathds{N}}\alpha_i(n)\, (\varphi^{(i)})^{\otimes_n}\; ,\; \Theta=\sum_{i\in\mathds{N}}\beta_i(n)\, \theta_{n,m_i}^{(i)}\; \Psi=\sum_{i\in\mathds{N}}\gamma_i(n) C(\sqrt{n}\varphi^{(i)})\Omega\; .
\end{equation*}
The suites $(\alpha_i(n))_{i\in\mathds{N}}$, $(\beta_i(n))_{i\in\mathds{N}}$ and $(\gamma_i(n))_{i\in\mathds{N}}$ are chosen such that $\norm{\Phi}=\norm{\Theta}=\norm{\Psi}=1$.
\end{definition}
\begin{remark}\label{sec:main-results-3}
  In particular we have
\begin{align*}
  \alpha_i(n)&=\alpha_i\Bigl(\sum_{i,j\in\mathds{N}}\bar{\alpha}_i\alpha_j\braket{\varphi^{(i)}}{\varphi^{(j)}}^n\Bigr)^{-1/2}\; ,\\
  \beta_i(n)&=\beta_i\Bigl(\sum_{i,j\in\mathds{N}}\bar{\beta}_i\beta_j\braket{\theta_{n,m_i}^{(i)}}{\theta_{n,m_j}^{(j)}}\Bigr)^{-1/2}\; ,\\
  \gamma_i(n)&=\gamma_i\Bigl(\sum_{i,j\in\mathds{N}}\bar{\gamma}_i\gamma_j\braket{C(\sqrt{n}\varphi^{(i)})\Omega}{C(\sqrt{n}\varphi^{(j)})\Omega}\Bigr)^{-1/2}\; .
\end{align*}
For all $i\in\mathds{N}$, $\alpha_i(n)$, $\beta_i(n)$ and $\gamma_i(n)$ are convergent when $n\to\infty$, and their absolute value is uniformly bounded in $n$ respectively by $K_\alpha\ass{\alpha_i}$, $K_\beta\ass{\beta_i}$ and $K_\gamma\ass{\gamma_i}$, where the constants depend only on the $l^1$-norm of the respective suite. Further details are discussed in section \ref{sec:proof-theorem-1}.
\end{remark}
\begin{remark}
  Since $\Psi$ states do not belong to a fixed particle subspace, we define the integral kernel of the reduced density matrix to be
\begin{equation*}
  \Tr_1 \rho_{\Psi}(t,x;y)=\frac{1}{\braket{\Psi(t)}{N\Psi(t)}}\braket{\Psi(t)}{a^*(y)a(x)\Psi(t)}\; .
\end{equation*}
\end{remark}

With a linear superposition of states as initial condition, the mean field limit is not a pure state. The reduced density matrix converges to a linear combination of projections on the solution of Hartree equation corresponding to each initial datum $\varphi^{(i)}$; in the topology induced by the Hilbert-Schmidt norm ($\norm{\,\cdot\,}_{HS}$):
\begin{thm}\label{sec:main-results}
Suppose there exists $D>0$ such that the operator inequality
\begin{equation*}
 V^2(x)\leq D(1-\Delta_x) 
\end{equation*}
is satisfied. Let $\Phi$, $\Theta$ and $\Psi$ satisfy definition~\ref{sec:main-results-2} with $\mathscr{D}=H^1(\mathds{R}^3)$. Also, for all $i\in\mathds{N}$, let $\varphi^{(i)}_t$ be the $\mathscr{C}^0(\mathds{R},H^1(\mathds{R}^3))$ solution of Hartree equation with initial datum $\varphi^{(i)}(0)=\varphi^{(i)}$. Then $\forall t \in\mathds{R}$, if there is $0\leq a<1/2$ such that, for large $n$, $m\sim a\ln n$:
\begin{align*}
  \underset{n\to\infty}{\text{HS-}\mathrm{lim}}\; &\Tr_1\rho_{\Phi}(t)=\sum_{i\in\mathds{N}}\ass{\frac{\alpha_i}{\norm{(\alpha_i)_{i\in\mathds{N}}}_{l^2}}}^2\ketbra{\varphi_t^{(i)}}{\varphi_t^{(i)}}\; ;\\
  \underset{n\to\infty}{\text{HS-}\mathrm{lim}}\; &\Tr_1\rho_{\Theta}(t)=\sum_{i\in\mathds{N}}\ass{\frac{\beta_i}{\norm{(\beta_i)_{i\in\mathds{N}}}_{l^2}}}^2\ketbra{\varphi_t^{(i)}}{\varphi_t^{(i)}}\; ;\\
  \underset{n\to\infty}{\text{HS-}\mathrm{lim}}\; &\Tr_1\rho_{\Psi}(t)=\sum_{i\in\mathds{N}}\ass{\frac{\gamma_i}{\norm{(\gamma_i)_{i\in\mathds{N}}}_{l^2}}}^2\ketbra{\varphi_t^{(i)}}{\varphi_t^{(i)}}\; .
\end{align*}
In particular, the following bounds hold:
  \begin{gather}\label{eq:5}\begin{split}
\norm{\Tr_1\rho_{\Phi}(t)-\norm{(\alpha_i)_{i\in\mathds{N}}}^{-2}_{l^2}\sum_{i\in\mathds{N}}\ass{\alpha_i}^2\ketbra{\varphi_t^{(i)}}{\varphi_t^{(i)}}}_{HS}\leq K_1\sum_{i<j}\ass{\bar{\alpha}_i(n)\alpha_j(n)}\ass{\braket{\varphi^{(i)}}{\varphi^{(j)}}}^n \\+ K_2 e^{K_3\ass{t}}\frac{1}{n^{1/4}}\; ,
  \end{split}\\
\label{eq:11}\begin{split}
\norm{\Tr_1\rho_{\Theta}(t)-\norm{(\beta_i)_{i\in\mathds{N}}}^{-2}_{l^2}\sum_{i\in\mathds{N}}\ass{\beta_i}^2\ketbra{\varphi_t^{(i)}}{\varphi_t^{(i)}}}_{HS}\leq K_1\sum_{i<j}\ass{\bar{\beta}_i(n)\beta_j(n)}\ass{\braket{\theta_{n,m_i}^{(i)}}{\theta_{n,m_j}^{(j)}}} \\+ K_2 e^{K_3\ass{t}}\frac{1}{n^{1/4}}e^{m/2}(m+1)^3\; ;
  \end{split}\\
\label{eq:12}
\begin{split}
  \norm{\Tr_1\rho_{\Psi}(t)-\norm{(\gamma_i)_{i\in\mathds{N}}}^{-2}_{l^2}\sum_{i\in\mathds{N}}\ass{\gamma_i}^2\ketbra{\varphi_t^{(i)}}{\varphi_t^{(i)}}}_{HS}\leq K_1e^{-4M^2 n} + K_2 e^{K_3\ass{t}}\frac{1}{n^{1/2}}\; ;
\end{split}
\end{gather}
where the $K_i$, $i=1,2,3$, are positive and depend on $D$, $M$ and $\norm{(\,\cdot\,_i)_{i\in\mathds{N}}}_{l^1}$.
\end{thm}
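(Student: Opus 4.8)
The plan is to expand the reduced density matrix of each of the three superpositions into a diagonal part (terms with equal indices) and an off-diagonal, interference part, and then to control three distinct error sources: the convergence of the normalisation constants, the mean field convergence of the diagonal blocks, and the decay of the interference blocks. I write the argument for $\Phi$, the cases $\Theta$, $\Psi$ being analogous with $\theta$-vectors and coherent states replacing $(\varphi^{(i)})^{\otimes_n}$, and with $\braket{\Psi(t)}{N\Psi(t)}$ in place of $n$ in the normalisation. Set $\phi^{(i)}=(\varphi^{(i)})^{\otimes_n}$ and let $\phi^{(i)}(t)$ denote its evolution under $H$; since $\Phi\in\H_n$ one has $\braket{\Phi(t)}{N\Phi(t)}=n$, so the kernel of $\Tr_1\rho_\Phi(t)$ is $\tfrac1n\sum_{i,j}\bar\alpha_i(n)\alpha_j(n)\braket{\phi^{(i)}(t)}{a^*(y)a(x)\phi^{(j)}(t)}$. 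Writing $\Gamma_{ij}(t)$ for the operator with kernel $\braket{\phi^{(i)}(t)}{a^*(y)a(x)\phi^{(j)}(t)}$ and $R_\infty=\norm{(\alpha_i)_{i\in\N}}_{l^2}^{-2}\sum_i\ass{\alpha_i}^2\ketbra{\varphi_t^{(i)}}{\varphi_t^{(i)}}$ for the claimed limit, I split $\Tr_1\rho_\Phi(t)-R_\infty$ into the diagonal dynamical error $\sum_i\ass{\alpha_i(n)}^2\bigl(\tfrac1n\Gamma_{ii}(t)-\ketbra{\varphi_t^{(i)}}{\varphi_t^{(i)}}\bigr)$, the coefficient error $\sum_i\bigl(\ass{\alpha_i(n)}^2-\norm{(\alpha_i)_{i\in\N}}_{l^2}^{-2}\ass{\alpha_i}^2\bigr)\ketbra{\varphi_t^{(i)}}{\varphi_t^{(i)}}$, and the interference term $\tfrac1n\sum_{i\neq j}\bar\alpha_i(n)\alpha_j(n)\Gamma_{ij}(t)$.

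For the coefficient error I use the explicit formula of remark~\ref{sec:main-results-3}: the deviation of $\ass{\alpha_i(n)}^2$ from its limit is governed by the off-diagonal part $\sum_{k<l}\ass{\bar\alpha_k\alpha_l}\ass{\braket{\varphi^{(k)}}{\varphi^{(l)}}}^n$ of the normalising sum, which vanishes because the $\varphi^{(i)}$ are unit and pairwise linearly independent, whence $\ass{\braket{\varphi^{(i)}}{\varphi^{(j)}}}<1$; this is exactly the first term on the right of \eqref{eq:5}. For $\Theta$ the analogous object is the sum over $\ass{\braket{\theta_{n,m_k}^{(k)}}{\theta_{n,m_l}^{(l)}}}$ of \eqref{eq:11}, and for $\Psi$ it is controlled by the Weyl overlaps $\braket{C(\sqrt n\varphi^{(i)})\Omega}{C(\sqrt n\varphi^{(j)})\Omega}$, which by property~\ref{item:3} equal $e^{-n\norm{\varphi^{(i)}-\varphi^{(j)}}^2/2}$ up to a phase and are exponentially small, giving $K_1e^{-4M^2n}$ in \eqref{eq:12}. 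For the diagonal dynamical error, each $\tfrac1n\Gamma_{ii}(t)$ is the reduced density matrix of a single state, so theorem~\ref{sec:main-results-1} (for $\Theta$) and the factorised/coherent estimates recalled in \eqref{eq:8} give $\norm{\tfrac1n\Gamma_{ii}(t)-\ketbra{\varphi_t^{(i)}}{\varphi_t^{(i)}}}_{HS}\leq\Tr\ass{\,\cdot\,}\lesssim e^{K\ass{t}}n^{-1/2}$ (times $e^{m_i/2}(m_i+1)^7$ for $\Theta$). Summation over $i$ is legitimate because, by remark~\ref{sec:main-results-3}, $\ass{\alpha_i(n)}\leq K_\alpha\ass{\alpha_i}$ with $(\alpha_i)\in l^1$, so dominated convergence lets me interchange the sum with the limit $n\to\infty$.

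The interference term is the core of the argument. Its trace is $\tfrac1n\braket{\phi^{(i)}(t)}{N\phi^{(j)}(t)}=\braket{\varphi^{(i)}}{\varphi^{(j)}}^n$, exponentially small and $t$-independent by unitarity, and at $t=0$ the full operator $\tfrac1n\Gamma_{ij}(0)=\braket{\varphi^{(i)}}{\varphi^{(j)}}^{n-1}\ketbra{\varphi^{(j)}}{\varphi^{(i)}}$ is exponentially small; but the nonlinearity of the Hartree flow prevents propagating such a bound vector-wise. I therefore pass to the fluctuation dynamics: writing $e^{-iHt}C(\sqrt n\varphi^{(i)})=C(\sqrt n\varphi_t^{(i)})\mathcal U_i(t)$ with $\mathcal U_i(t)=C^*(\sqrt n\varphi_t^{(i)})e^{-iHt}C(\sqrt n\varphi^{(i)})$, and inserting $C(\sqrt n\varphi_t^{(i)})C^*(\sqrt n\varphi_t^{(i)})=1$, properties~\ref{item:1}--\ref{item:2} turn the conjugated $a^*(y)a(x)$ into a classical part plus fluctuations and produce the displacement $C^*(\sqrt n\varphi_t^{(i)})C(\sqrt n\varphi_t^{(j)})=C\bigl(\sqrt n(\varphi_t^{(j)}-\varphi_t^{(i)})\bigr)e^{i\vartheta}$. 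For $\Psi$ this multiplies every term by the overlap $\braket{\mathcal U_i(t)\Omega}{C(\sqrt n(\varphi_t^{(j)}-\varphi_t^{(i)}))\mathcal U_j(t)\Omega}$, exponentially small by property~\ref{item:3} since $\varphi^{(i)}\neq\varphi^{(j)}$ forces $\varphi_t^{(i)}\neq\varphi_t^{(j)}$; hence the whole interference block is exponentially small in Hilbert-Schmidt norm, giving $K_1e^{-4M^2n}$ of \eqref{eq:12}. For $\Phi$ and $\Theta$ the fixed particle number forces a projection onto $\H_n$, via $P_nC(\sqrt n\varphi)\Omega=(e^{-n/2}n^{n/2}/\sqrt{n!})\,\varphi^{\otimes_n}$ with Stirling constant of order $n^{-1/4}$; this spoils the clean factorisation of the Weyl overlap, and the best I can extract is $\norm{\tfrac1n\Gamma_{ij}(t)}_{op}\lesssim n^{-1/2}$ (carrying, for $\Theta$, excitation factors $e^{m}(m+1)^6$). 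Feeding this into the interpolation $\norm{A}_{HS}\leq\norm{A}_{op}^{1/2}(\Tr\ass{A})^{1/2}$ together with the partial-trace bound $\Tr\ass{\tfrac1n\Gamma_{ij}(t)}\leq1$ then gives $\norm{\tfrac1n\Gamma_{ij}(t)}_{HS}\lesssim n^{-1/4}$ (resp.\ $n^{-1/4}e^{m/2}(m+1)^3$), the rates in \eqref{eq:5} and \eqref{eq:11}.

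Collecting the three contributions and summing against the $l^1$-bounded, uniformly convergent coefficients yields \eqref{eq:5}, \eqref{eq:11}, \eqref{eq:12}; letting $n\to\infty$ with $m\sim a\ln n$ and $0\leq a<1/2$ makes $e^{m/2}\sim n^{a/2}$, with $a/2<1/4$, negligible against $n^{1/4}$, which proves the three HS-limits (note the sharper threshold $a<1/2$ here, against $a<1$ in theorem~\ref{sec:main-results-1}, reflects the rate $n^{-1/4}$ versus $n^{-1/2}$). I expect the genuine difficulty to lie entirely in the interference term under the interacting dynamics: one needs a bound on the growth of the fluctuation number $\braket{\Omega}{\mathcal U_i^*(t)N\mathcal U_i(t)\Omega}$ uniform enough to make both the Weyl-displacement estimate and the $\H_n$-comparison effective, and one must handle two fluctuation dynamics built around distinct Hartree trajectories simultaneously; the projection onto $\H_n$ required for the fixed-particle-number states is precisely what degrades the rate from $n^{-1/2}$ to $n^{-1/4}$.
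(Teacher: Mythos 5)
Your decomposition into diagonal, coefficient, and interference parts is exactly the paper's, and two of the three pieces are handled correctly: the coefficient error via the convergence of the $n$-dependent normalization (the paper's lemma~\ref{sec:proof-theor-refs}), and the diagonal blocks via theorem~\ref{sec:main-results-1} (note $\theta_{n,0}=\varphi^{\otimes_n}$, so it covers the factorized case too) together with $\norm{\,\cdot\,}_{HS}\leq\Tr\ass{\,\cdot\,}$ and dominated convergence in $i$. But the core of your argument --- the interference blocks $\tfrac1n\Gamma_{ij}(t)$, $i\neq j$ --- rests on the operator-norm bound $\nnorm{\tfrac1n\Gamma_{ij}(t)}\lesssim n^{-1/2}$, which you never derive: you say ``the best I can extract is,'' but nothing in your sketch produces it. This is a genuine gap, and not a small one. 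The only tool in the paper for trading the Stirling factor $d_n\sim n^{1/4}$ against a power of $n$ is lemma~\ref{sec:properties-theta_n-m-3}, which bounds $\norm{(N+1)^{-(1/4+\epsilon)}C^*(\sqrt{n}\varphi)\theta_{n,m}}$ only when the Weyl parameter and the (partially) factorized state are built on the \emph{same} $\varphi$. In the interference blocks the relevant bra is the mixed object $C^*(\sqrt{n}\varphi^{(j)})(\varphi^{(i)})^{\otimes_n}$ with $i\neq j$, for which no analogue of that lemma is proved (in the paper or by you); an honest version of your claim would require a new Laguerre-type analysis of the number distribution of a product state displaced along a different direction. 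Your interpolation $\norm{A}_{HS}\leq\nnorm{A}^{1/2}(\Tr\ass{A})^{1/2}$ with $\Tr\ass{\tfrac1n\Gamma_{ij}(t)}\leq1$ is correct (partial trace contracts the trace norm), but without the operator-norm input it proves nothing.

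The paper sidesteps this entirely: it never seeks an improved bound on the mixed bra. It tests the kernel against arbitrary $\xi\in L^2(\R^6)$, writes $(\varphi^{(j)})^{\otimes_n}=d_nP_nC(\sqrt{n}\varphi^{(j)})\Omega$, conjugates $a^*(x)a(y)$ by $C(\sqrt{n}\varphi_t^{(j)})$, bounds $\norm{C^*(\sqrt{n}\varphi_0^{(j)})(\varphi^{(i)})^{\otimes_n}}$ simply by $1$, and pays the full $d_n\sim n^{1/4}$ against the $\sqrt{n}\,e^{K\ass{t}}$ coming from \eqref{eq:9}; dividing by $\Tr T_\Phi(t)=n$ and applying Riesz's lemma then gives the interference blocks directly in Hilbert--Schmidt norm at rate $n^{-1/4}$, plus the exponentially small classical-overlap term $\ass{\braket{\varphi^{(i)}}{\varphi^{(j)}}}^n$ --- no operator-norm estimate and no interpolation needed. (So the $n^{-1/4}$ in \eqref{eq:5} and \eqref{eq:11} is the price of the missing mixed-index lemma, exactly the degradation you anticipated, but it is obtained without the bound you assert.) Separately, your treatment of $\Psi$ overreaches: the exponentially small quantity is the classical overlap $\braket{C(\sqrt{n}\varphi^{(i)})\Omega}{C(\sqrt{n}\varphi^{(j)})\Omega}$; the fluctuation part $\braket{\mathcal{U}_i(t)\Omega}{C(\sqrt{n}(\varphi_t^{(j)}-\varphi_t^{(i)}))\,[\cdots]\,\mathcal{U}_j(t)\Omega}$ cannot be declared exponentially small from property~\ref{item:3} alone, because $\mathcal{U}_i(t)\Omega$ is not the vacuum and \eqref{eq:9} supplies only polynomial moments of $N$; inserting $(N+1)^{-\delta}(N+1)^{\delta}$ and using that the displaced state has particle number concentrated near $n\norm{\varphi^{(i)}-\varphi^{(j)}}^2$ yields a polynomial gain, which is what actually feeds the $n^{-1/2}$ rate of \eqref{eq:12}.
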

\begin{remark}
The first term on the right hand side of equation~\eqref{eq:5} converges to zero when $n\to\infty$ because, by definition~\ref{sec:main-results-2} and Riesz's lemma, $\ass{\braket{\varphi^{(i)}}{\varphi^{(j)}}}<1$ (and $\ass{\bar{\alpha}_i(n)\alpha_j(n)}\leq K_\alpha^2 \ass{\bar{\alpha}_i\alpha_j}$). The first term on the right hand side of equation~\eqref{eq:11} also converges to zero if $m\leq \ln n$. This is because
\begin{equation*}
  \ass{\braket{\theta_{n,m_i}^{(i)}}{\theta_{n,m_j}^{(j)}}}\leq (m+1)(m!)^2 n^m\ass{\braket{\varphi^{(i)}}{\varphi^{(j)}}}^{n-2m}\leq K n^{2(\ln n+1)}\ass{\braket{\varphi^{(i)}}{\varphi^{(j)}}}^{n-2\ln n}\underset{n\to\infty}{\longrightarrow}0\; ;
\end{equation*}
furthermore since $\ass{\bar{\beta}_i(n)\beta_j(n)}\ass{\braket{\theta_{n,m_i}^{(i)}}{\theta_{n,m_j}^{(j)}}}\leq K_\beta^2\ass{\bar{\beta}_i\beta_j}$ we can exchange summation with the limit $n\to\infty$.
\end{remark}
\begin{remark}
  In this paper we focused attention only on one-particle reduced density matrices. The same method can be used to calculate the limit of $k$-particle reduced density matrices (with $k>1$).
\end{remark}

The rest of the paper is organized as follows. In section~\ref{sec:dynam-prop} we analyze the dynamics of classical (Hartree) and quantum system. In section~\ref{sec:properties-theta_n-m} the combinatorial properties of $\theta_{n,m}$ states are studied. Finally in section~\ref{sec:mean-field-limit} we consider the limit $n\to\infty$, and prove theorems~\ref{sec:main-results-1} and~\ref{sec:main-results}.

\section{Classical and quantum dynamics.}
\label{sec:dynam-prop}

In this section we review some properties of quantum and classical evolution needed to perform the classical limit.

The first proposition concerns the existence and unicity of solutions of the Hartree Cauchy problem \citep[see][Remark 1.3]{MR2530155}:
\begin{equation}\label{eq:3}
    \left\{
    \begin{aligned}
        i\partial_t \varphi_t &= -\Delta\varphi_t +(V*\ass{\varphi_t}^2)\varphi_t\\
      \varphi(t_0)&=\varphi_0
    \end{aligned}
\right .\mspace{60mu} .
\end{equation}
\begin{proposition}\label{sec:class-quant-dynam}
  Under the assumption $V^2(x)\leq D(1-\Delta_x)$, \eqref{eq:3} has a unique solution $\varphi_t\in \mathscr{C}^0(\mathds{R},H^1(\mathds{R}^3))$ for all $\varphi_0\in H^1(\mathds{R}^3)$. Furthermore, using also the conservation in time of $\norm{\varphi}_{L^2}$ and energy
  \begin{equation*}
    E(\varphi)=\int\ide{x}\ass{\nabla\varphi(x)}^2+\frac{1}{2}\int\de{x}\ide{y} V(x-y)\norm{\varphi(x)}^2\norm{\varphi(y)}^2\; ,
  \end{equation*}
we obtain the following bound for $\norm{\varphi_t}_{H^1}$:
\begin{equation*}
  \norm{\varphi_t}_{H^1}^2\leq c \, \Bigl(\norm{\varphi_0}^2_{H^1}(1+\norm{\varphi_0}^2_{L^2})+\norm{\varphi_0}_{L^2}^4+\norm{\varphi_0}_{L^2}^2\Bigr)\; ,
\end{equation*}
for some positive $c$ that depends only on $D$.
\end{proposition}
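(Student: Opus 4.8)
The plan is to treat \eqref{eq:3} by the standard fixed-point/energy method for nonlinear Schrödinger equations, the only non-standard input being the use of the hypothesis $V^2(x)\leq D(1-\Delta_x)$ to control the Hartree nonlinearity. Writing $F(\varphi)=(V*\ass{\varphi}^2)\varphi$ and taking $t_0=0$ without loss of generality, I would first recast the problem in Duhamel (mild) form,
\begin{equation*}
  \varphi_t=e^{it\Delta}\varphi_0-i\int_0^t e^{i(t-s)\Delta}F(\varphi_s)\,\de s\; ,
\end{equation*}
and seek a fixed point of the right-hand side in the Banach space $X_T=\mathscr{C}^0([-T,T],H^1(\R^3))$, using that $e^{it\Delta}$ commutes with $1-\Delta$ and is therefore unitary on $H^1$.

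The key estimate, and the place where the hypothesis enters, is the bound on $F$ in $H^1$. The operator inequality $V^2\leq D(1-\Delta)$ says exactly that multiplication by $V$ is bounded from $H^1$ into $L^2$, with $\norm{Vf}_{L^2}\leq\sqrt{D}\,\norm{f}_{H^1}$. Applying it to $z\mapsto\varphi(x-z)$, which has the same $H^1$ norm as $\varphi$, Cauchy--Schwarz gives the pointwise control
\begin{equation*}
  \ass{(V*\ass{\varphi}^2)(x)}\leq\Bigl(\int\ide{y}\ass{V(x-y)}^2\ass{\varphi(y)}^2\Bigr)^{1/2}\norm{\varphi}_{L^2}\leq\sqrt{D}\,\norm{\varphi}_{H^1}\norm{\varphi}_{L^2}\; ,
\end{equation*}
together with the analogous bound for $V*(\bar\varphi\nabla\varphi)=\tfrac12\nabla(V*\ass{\varphi}^2)$. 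Differentiating the product and inserting these two estimates yields $\norm{F(\varphi)}_{H^1}\leq C\norm{\varphi}_{H^1}^2\norm{\varphi}_{L^2}$; since $F$ is trilinear in $(\varphi,\bar\varphi,\varphi)$, the telescoped difference $F(\varphi)-F(\psi)$ obeys the locally Lipschitz bound $\norm{F(\varphi)-F(\psi)}_{H^1}\leq C(\norm{\varphi}_{H^1}^2+\norm{\psi}_{H^1}^2)\norm{\varphi-\psi}_{H^1}$. These make the Duhamel map a contraction on a ball of $X_T$ for $T$ small depending only on $\norm{\varphi_0}_{H^1}$, so Banach's theorem gives a unique local solution, uniqueness on the whole existence interval following from the same Lipschitz bound and Gronwall.

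To pass from local to global, and to obtain the stated bound, I would invoke the two conserved quantities, the mass $\norm{\varphi_t}_{L^2}=\norm{\varphi_0}_{L^2}$ and the energy $E(\varphi_t)=E(\varphi_0)$ (conservation checked first on smooth data, then extended by continuity of the flow in $H^1$). From the definition of $E$,
\begin{equation*}
  \norm{\nabla\varphi_t}_{L^2}^2=E(\varphi_0)-\tfrac12\braket{\ass{\varphi_t}^2}{V*\ass{\varphi_t}^2}\; ,
\end{equation*}
and the pointwise estimate above bounds the potential term by $\tfrac12\sqrt{D}\,\norm{\varphi_t}_{L^2}^3\norm{\varphi_t}_{H^1}$. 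Adding $\norm{\varphi_t}_{L^2}^2=\norm{\varphi_0}_{L^2}^2$ to both sides turns this into a quadratic inequality of the form $\norm{\varphi_t}_{H^1}^2\leq A+B\,\norm{\varphi_t}_{H^1}$, with $A,B$ expressed through $\norm{\varphi_0}_{H^1}$, $\norm{\varphi_0}_{L^2}$ and $D$; resolving it by Young's inequality yields the advertised bound on $\norm{\varphi_t}_{H^1}$, uniform in $t$. Since the local existence time depends only on the $H^1$ norm, this a priori bound rules out blow-up and extends the solution to all of $\R$.

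I expect the main obstacle to be the $H^1$ nonlinearity estimate: one must extract from the single operator inequality $V^2\leq D(1-\Delta)$ both the $L^2\!\leftarrow\!H^1$ boundedness of $V$ and its consequences for the convolution $V*\ass{\varphi}^2$ and its gradient, since it is precisely the relative boundedness of $V$, rather than any pointwise integrability, that is available. Closing the a priori estimate is then routine, the only care being to keep every constant dependent on $D$ and $\norm{\varphi_0}_{H^1}$ alone.
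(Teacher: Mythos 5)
The paper offers no proof of this proposition at all: it is imported verbatim from \citet{MR2530155} (Remark 1.3), so the only meaningful comparison is with the standard argument that reference sketches --- and that is precisely the argument you give. Your route (Duhamel form, contraction in $\mathscr{C}^0([-T,T],H^1)$, globalization by mass/energy conservation) and your key estimate are the right ones; in particular the observation that $V^2\leq D(1-\Delta)$ is equivalent to $\norm{Vf}_{L^2}\leq \sqrt{D}\norm{f}_{H^1}$, applied to translates of $\varphi$ to get $\ass{(V*\ass{\varphi}^2)(x)}\leq \sqrt{D}\norm{\varphi}_{H^1}\norm{\varphi}_{L^2}$, is exactly the mechanism used throughout this literature. (One cosmetic slip: $\nabla(V*\ass{\varphi}^2)=2\,V*\Re(\bar\varphi\nabla\varphi)$, so your identity $V*(\bar\varphi\nabla\varphi)=\tfrac12\nabla(V*\ass{\varphi}^2)$ misses a real part; the estimate you draw from it is unaffected.)

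The one step that does not deliver what you claim is the final resolution of the quadratic inequality. From energy and mass conservation you correctly arrive at $\norm{\varphi_t}_{H^1}^2\leq A+B\norm{\varphi_t}_{H^1}$ with $B=\tfrac12\sqrt{D}\norm{\varphi_0}_{L^2}^3$ and $A=E(\varphi_0)+\norm{\varphi_0}_{L^2}^2$. Absorbing the linear term necessarily costs a term of order $B^2$, so what actually comes out is
\begin{equation*}
  \norm{\varphi_t}_{H^1}^2\leq c\,\Bigl(\norm{\varphi_0}_{H^1}^2(1+\norm{\varphi_0}_{L^2}^2)+\norm{\varphi_0}_{L^2}^4+\norm{\varphi_0}_{L^2}^2+D\norm{\varphi_0}_{L^2}^6\Bigr)\; ,
\end{equation*}
where the sixth power of the mass (coming from $B^2$, and it reappears no matter how you distribute Young's inequality, since the potential term is linear in $\norm{\varphi_t}_{H^1}$ with coefficient $\sim\norm{\varphi_0}_{L^2}^3$) is \emph{not} dominated by the proposition's right-hand side when $\norm{\varphi_0}_{L^2}$ is large. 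The advertised polynomial follows from yours only under $\norm{\varphi_0}_{L^2}\leq 1$, in which case $\norm{\varphi_0}_{L^2}^6\leq\norm{\varphi_0}_{L^2}^2$; this is harmless for the paper, where $\varphi$ is always normalized in $L^2$, but your assertion that Young's inequality ``yields the advertised bound'' is not literally true for general $H^1$ data. You should either carry the $\norm{\varphi_0}_{L^2}^6$ term in the statement, or say explicitly that you restrict to normalized (or $L^2$-bounded) initial data before matching the claimed constant.
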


We can formulate also the following proposition concerning quantum evolution \citep[see][]{MR530915,MR2530155}.
\begin{proposition}\label{sec:class-quant-dynam-1}
  Let $V$ be a real symmetric function such that for some $0<D$ the operator inequality $V^2(x)\leq D(1-\Delta_x)$ is satisfied. Then:
  \begin{enumerate}[i.]
  \item $H$ is a self-adjoint operator with domain $D(H)\subset\mathscr{F}_s(L^2(\mathds{R}^3))$, defined through the direct sum decomposition
    \begin{equation*}
      H=\bigoplus_{n=1}^\infty H_n\; .
    \end{equation*}
$H_n$ is the self-adjoint operator~\eqref{eq:10} on $D(H_{0n})$ (defined as a Kato sum).
  \item   Let $\varphi_0\in H^1(\mathds{R}^3)$, $\varphi_t\in \mathscr{C}^0(\mathds{R},H^1(\mathds{R}^3))$ the associated solution of~\eqref{eq:3}. Define also
  \begin{gather*}
    U(t)=\exp(-it H)\; ,\\
    W(t,t_0)=C^*(\sqrt{n}\varphi_t)U(t-t_0)C(\sqrt{n}\varphi_0)\; .
  \end{gather*}
Then for all $\delta\geq 0$, $\phi\in D(N^{\max\{2,2\delta +1\}})$ the following inequality holds:
\begin{equation}\label{eq:9}
      \norm{(N+1)^\delta W(t,t_0)\phi}\leq K_1(\delta)e^{K_2(\delta)\ass{t-t_0}}\norm{(N+1)^{\max\{2,2\delta+1\}}\phi}\; ;
\end{equation}
where $K_i(\delta)$, $i=1,2$, are positive and depend only on $\delta$ and $\norm{\varphi_0}_{H^1}$.
  \end{enumerate}
\end{proposition}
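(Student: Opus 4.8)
The plan is to establish the two assertions separately, the first being a relative-boundedness argument and the second the technical core. For part~(i) I would deduce the self-adjointness of each $H_n$ from the Kato--Rellich theorem, using the hypothesis $V^2(x)\leq D(1-\Delta_x)$ to dominate the interaction by the free kinetic energy $H_{0n}$. Passing to relative coordinates for the pair $(i,j)$ one has $-\Delta_{x_i-x_j}\leq\tfrac12(-\Delta_{x_i}-\Delta_{x_j})$, whence the operator inequality gives $\norm{V(x_i-x_j)\psi}\leq\sqrt{D}\norm{\psi}+\sqrt{D/2}\,\norm{H_{0n}^{1/2}\psi}$. Summing the $\binom n2$ terms, dividing by $n$, and using $\norm{H_{0n}^{1/2}\psi}\leq a\norm{H_{0n}\psi}+\tfrac{1}{4a}\norm{\psi}$ with $a$ sufficiently small yields a relative bound strictly below $1$ for each fixed $n$; thus $H_n$ is self-adjoint on $D(H_{0n})$, and $H=\bigoplus_n H_n$ is self-adjoint on its natural direct-sum domain.

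For part~(ii) the central object is the generator $\mathcal{L}(t)$ of the fluctuation dynamics, defined by $i\partial_t W(t,t_0)=\mathcal{L}(t)W(t,t_0)$, namely
\[\mathcal{L}(t)=\bigl[i\partial_t C^*(\sqrt n\varphi_t)\bigr]C(\sqrt n\varphi_t)+C^*(\sqrt n\varphi_t)HC(\sqrt n\varphi_t).\]
I would first compute $\mathcal{L}(t)$ explicitly through the Weyl translation $C^*(\sqrt n\varphi_t)a(x)C(\sqrt n\varphi_t)=a(x)+\sqrt n\varphi_t(x)$ of property~\ref{item:1}. Expanding $H$ produces terms of order $n^{3/2}$ and $n^{1/2}$ that are linear in $a^{\#}$; these, together with the linear part of the Weyl derivative, cancel exactly by the Hartree equation~\eqref{eq:4}, while the $c$-number term of order $n$ is an irrelevant real phase. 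What survives is
\[\mathcal{L}(t)=\de\Gamma\bigl(-\Delta+V*\ass{\varphi_t}^2\bigr)+\mathcal{Q}(t)+\tfrac{1}{\sqrt n}\mathcal{C}(t)+\tfrac1n\mathcal{V},\]
where $\mathcal{Q}(t)$ gathers the exchange $a^*a$ term and the pairing $a^*a^*$, $aa$ terms, $\mathcal{C}(t)$ is cubic, and $\mathcal{V}$ is the original quartic interaction. The kernels of the quadratic terms in $\mathcal{Q}$ — of the type $V(x-y)\bar\varphi_t(y)\varphi_t(x)$ — are Hilbert--Schmidt, with norm $\leq\sqrt D\,\norm{\varphi_t}_{H^1}\norm{\varphi_t}_{L^2}$ by the operator inequality, hence uniformly bounded in $t$ through Proposition~\ref{sec:class-quant-dynam}; the cubic coefficient is estimated instead by applying $V^2\leq D(1-\Delta)$ in a single variable, again producing a factor $\sqrt D\,\norm{\varphi_t}_{H^1}$.

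I would then prove~\eqref{eq:9} by a Gronwall argument for $g(t)=\norm{(N+1)^\delta W(t,t_0)\phi}^2$. Writing $\Psi(t)=W(t,t_0)\phi$ and using the self-adjointness of $\mathcal{L}(t)$,
\[\tfrac{d}{dt}g(t)=-i\braket{\Psi}{[(N+1)^{2\delta},\mathcal{L}(t)]\Psi}.\]
The number-conserving part of $\mathcal{L}$ — both the $\de\Gamma$ summand and the entire quartic $\mathcal{V}$, which changes the particle number by zero — commutes with $(N+1)^{2\delta}$ and drops out. For the pairing terms $\mathcal{Q}$ I would use $(N+3)^{2\delta}-(N+1)^{2\delta}\leq C_\delta(N+1)^{2\delta-1}$, the pull-through relations $g(N)a^{\#}=a^{\#}g(N\pm1)$, and the estimates of item~\ref{item:7}; after a symmetric splitting of the $N$-powers this contribution is bounded by $C(\norm{\varphi_t}_{H^1})\norm{(N+1)^\delta\Psi}^2$, which is precisely the Gronwall form.

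The main obstacle is the non-number-conserving cubic term $\tfrac{1}{\sqrt n}\mathcal{C}(t)$. Its commutator with $(N+1)^{2\delta}$ is not dominated by $(N+1)^{2\delta}$ alone: the extra creation/annihilation operator costs half a power of $N$, yielding a bound of the type $\tfrac{C}{\sqrt n}\norm{(N+1)^\delta\Psi}\,\norm{(N+1)^{\delta+1/2}\Psi}$, so the differential inequality couples level $\delta$ to level $\delta+\tfrac12$; closing this coupling costs additional moments of $N$ and accounts for the passage from $2\delta$ to $2\delta+1$ on the right-hand side. The floor at power $2$ reflects that, merely to give meaning to $\mathcal{L}(t)\Psi(t)$ — in particular to its quartic summand $\tfrac1n\mathcal{V}$, which is controlled only by $(N+1)^2$ — one must propagate $\phi\in D(N^2)$. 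Once the resulting chain of inequalities is closed, Gronwall's lemma produces the factor $e^{K_2(\delta)\ass{t-t_0}}$, with constants depending only on $\delta$ and, through the uniform $H^1$-bound of Proposition~\ref{sec:class-quant-dynam}, on $\norm{\varphi_0}_{H^1}$.
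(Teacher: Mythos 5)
The paper never proves this proposition: it is quoted from the cited literature (Ginibre--Velo and Rodnianski--Schlein), so your attempt can only be judged against those proofs. Your part (i) is correct and is the standard argument: the hypothesis gives $\norm{V(x_i-x_j)\psi}\leq\sqrt{D}\norm{\psi}+\sqrt{D/2}\,\norm{H_{0n}^{1/2}\psi}$, hence $V_n$ is infinitesimally $H_{0n}$-bounded for each fixed $n$, Kato--Rellich applies, and the direct sum of the $H_n$ is self-adjoint on its natural domain. In part (ii) your skeleton is also the right one: the computation of the generator $\mathcal{L}(t)$, the cancellation of the terms linear in $a^{\#}$ by the Hartree equation \eqref{eq:4}, the fact that the $\de\Gamma$ part, the exchange term and the quartic term commute with $(N+1)^{2\delta}$, and the Hilbert--Schmidt bound $\sqrt{D}\norm{\varphi_t}_{H^1}\norm{\varphi_t}_{L^2}$ for the pairing kernel (uniform in $t$ by Proposition~\ref{sec:class-quant-dynam}) are all correct and all appear in the cited proofs.

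The genuine gap is in your final paragraph, and it sits exactly where the real work of Rodnianski--Schlein lies. Your differential inequality has the form $\dot g_\delta\leq C g_\delta+\tfrac{C}{\sqrt n}\,g_\delta^{1/2}g_{\delta+1/2}^{1/2}$ with $g_\delta(t)=\norm{(N+1)^\delta W(t,t_0)\phi}^2$, and this hierarchy couples \emph{upward without end}: level $\delta$ needs level $\delta+\tfrac12$, which needs level $\delta+1$, and so on. Gronwall cannot be applied to an infinite chain, and the phrase ``closing this coupling costs additional moments of $N$'' is not an argument: note that \eqref{eq:9} is uniform in $n$, while each rung of your ladder only gains a single factor $n^{-1/2}$, so no finite truncation of the chain closes by itself. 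The missing ingredient is an a priori, $n$-dependent bound that terminates the hierarchy after finitely many rungs: since $U(t)$ commutes with $N$ and conjugation by Weyl operators shifts $N$ by a field operator plus $n\norm{\varphi_t}_{L^2}^2$, one obtains the form bound $W^*(t,t_0)(N+1)W(t,t_0)\leq C(N+n+1)$, and, with commutator control, its higher powers, so that $g_{\delta+k/2}(t)\leq C_k\norm{(N+n+1)^{\delta+k/2}\phi}^2$. One then iterates the differential inequality to a finite depth chosen so that the accumulated factors $n^{-k/2}$ coming from the cubic term beat the powers of $n$ produced by this crude bound; it is precisely this bookkeeping that generates the loss from $\delta$ to $2\delta+1$ (and the floor at $2$) in the exponent on the right of \eqref{eq:9} --- your ``domain of $\mathcal{L}(t)$'' explanation of the floor describes why one assumes $\phi\in D(N^2)$, not why that norm appears in the estimate. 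Without this device (or an alternative such as Ginibre--Velo's term-by-term control of the Dyson expansion of $W$), the Gronwall scheme you describe does not close, so the central estimate \eqref{eq:9} remains unproved.
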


\section{Properties of $\theta_{n,m}$ states.}
\label{sec:properties-theta_n-m}

In this section we study the partially factorized states $\theta_{n,m}$ defined in section~\ref{sec:introduction}. The technical result formulated in lemma~\ref{sec:properties-theta_n-m-3} is crucial to perform the mean field limit $n\to\infty$. Its proof mimics the one performed in the case of completely condensed states, found in \citep[Lemma 6.3]{2011JMP}.

\begin{lemma}
  \label{sec:properties-theta_n-m-2}
The following identities hold:
\begin{align*}
  \theta_{n,m}&=\frac{1}{\sqrt{m!}}\int\de{y_1}\dotsm\ide{y_m}\psi_m(y_1,\dotsc,y_m)a^*(y_1)\dotsm a^*(y_m)\varphi^{\otimes_{n-m}}\\
&=\frac{1}{\sqrt{(n-m)!m!}}\int\de{y_1}\dotsm\ide{y_m}\psi_m(y_1,\dotsc,y_m)\bigl(a^*(\varphi)\bigr)^{n-m}a^*(y_1)\dotsm a^*(y_m)\Omega\\
&=\frac{d_{n,m}}{\sqrt{m!}}\int\de{y_1}\dotsm\ide{y_m}\psi_m(y_1,\dotsc,y_m)P_{n}C(\sqrt{n}\varphi)a^*(y_1)\dotsm a^*(y_m)\Omega\; ;
\end{align*}
where $P_{n}$ is the projector on $\mathscr{H}_{n}$ and
\begin{equation*}
  d_{n,m}=\frac{\sqrt{(n-m)!}}{\exp(-n/2)n^{(n-m)/2}} \text{ ($d_{n,m}\sim (n-m)^{1/4}e^{m/2}$ for large $n$).}
\end{equation*}
\end{lemma}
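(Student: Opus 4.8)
The plan is to establish the three identities in order, each following from the previous by an elementary Fock-space manipulation; only the third genuinely uses the orthogonality hypothesis~\eqref{eq:1}. First I would record the basic rule relating iterated creation operators to the symmetrizer: for any $\phi\in\mathscr{H}_{n-m}$ and $f_1,\dotsc,f_m\in\mathscr{H}$,
\begin{equation*}
 a^*(f_1)\dotsm a^*(f_m)\phi=\sqrt{\tfrac{n!}{(n-m)!}}\,S_n\bigl(f_1\otimes\dotsm\otimes f_m\otimes\phi\bigr)\; .
\end{equation*}
This follows by induction on $m$ from the definition of $a^*(x)$: applying $a^*(f)$ to a symmetric $k$-particle vector yields $\sqrt{k+1}\,S_{k+1}(f\otimes\,\cdot\,)$, and any partial symmetrization is absorbed by the subsequent total one, so the factors $\sqrt{n-m+1}\dotsm\sqrt{n}$ accumulate to $\sqrt{n!/(n-m)!}$. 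Integrating this against $\psi_m(y_1,\dotsc,y_m)$ with $\phi=\varphi^{\otimes_{n-m}}$, and using that $\psi_m$ is symmetric so $S_n(\psi_m\otimes\varphi^{\otimes_{n-m}})=S_n(\varphi^{\otimes_{n-m}}\otimes\psi_m)$, the right-hand side becomes $\sqrt{n!/(n-m)!}\,S_n(\varphi^{\otimes_{n-m}}\otimes\psi_m)$. Since $\tfrac{1}{\sqrt{m!}}\sqrt{n!/(n-m)!}=\binom{n}{m}^{1/2}=c_{n,m}$, this is exactly $\theta_{n,m}$, giving the first identity.

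For the second identity I would substitute $\varphi^{\otimes_{n-m}}=\tfrac{1}{\sqrt{(n-m)!}}\bigl(a^*(\varphi)\bigr)^{n-m}\Omega$ (the case $k=n-m$ of $\bigl(a^*(\varphi)\bigr)^k\Omega=\sqrt{k!}\,\varphi^{\otimes_k}$, itself a special case of the rule above) into the first identity, then use $[a^*(x),a^*(x')]=0$ to move the $n-m$ factors $a^*(\varphi)$ to the left of $a^*(y_1)\dotsm a^*(y_m)$. The prefactor $\tfrac{1}{\sqrt{m!}}\cdot\tfrac{1}{\sqrt{(n-m)!}}$ then matches the claimed $\tfrac{1}{\sqrt{(n-m)!\,m!}}$.

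The third identity is where I expect the only real work. Starting from $C(\sqrt n\varphi)a^*(y_1)\dotsm a^*(y_m)\Omega$, I would insert $C^*(\sqrt n\varphi)C(\sqrt n\varphi)=1$ between consecutive factors and use property~\ref{item:1}, which gives $C(\sqrt n\varphi)a^*(y_i)C^*(\sqrt n\varphi)=a^*(y_i)-\sqrt n\,\bar\varphi(y_i)$; hence
\begin{equation*}
 C(\sqrt n\varphi)a^*(y_1)\dotsm a^*(y_m)\Omega=\prod_{i=1}^m\bigl(a^*(y_i)-\sqrt n\,\bar\varphi(y_i)\bigr)\,C(\sqrt n\varphi)\Omega\; .
\end{equation*}
Expanding the product and integrating against $\psi_m$, every term in which at least one factor $\bar\varphi(y_i)$ is selected contributes an integral $\int\de{y_i}\,\psi_m(\dotsc,y_i,\dotsc)\bar\varphi(y_i)$, which vanishes by the orthogonality condition~\eqref{eq:1} together with the symmetry of $\psi_m$. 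Only the pure creation term survives, so $\int\psi_m\,C(\sqrt n\varphi)a^*(y_1)\dotsm a^*(y_m)\Omega=\int\psi_m\,a^*(y_1)\dotsm a^*(y_m)\,C(\sqrt n\varphi)\Omega$. Applying $P_n$ and noting that the $a^*$'s raise the particle number by exactly $m$, only the $(n-m)$-particle component of the coherent state survives; by property~\ref{item:3}, $P_{n-m}C(\sqrt n\varphi)\Omega=e^{-n/2}n^{(n-m)/2}(n-m)!^{-1/2}\varphi^{\otimes_{n-m}}$. Combining with the first identity yields $P_n\int\psi_m\,C(\sqrt n\varphi)a^*(y_1)\dotsm a^*(y_m)\Omega=e^{-n/2}n^{(n-m)/2}(n-m)!^{-1/2}\sqrt{m!}\,\theta_{n,m}$, and solving for $\theta_{n,m}$ produces precisely the factor $d_{n,m}/\sqrt{m!}$. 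The stated asymptotics $d_{n,m}\sim(n-m)^{1/4}e^{m/2}$ then follow from Stirling's formula, the residual factor $(1-m/n)^{(n-m)/2}\le 1$ being absorbed into the estimate.

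The main obstacle is the bookkeeping in this last step: one must justify interchanging the operator-valued integral over $y_1,\dotsc,y_m$, the projection $P_n$, and the Weyl conjugation. Since everything takes place on the dense domain of finite-particle vectors, where $C(\sqrt n\varphi)$ and the creation operators act through absolutely convergent series, these interchanges are legitimate; the conceptual heart is simply that the displacement produced by the Weyl operator is annihilated, term by term, by the orthogonality of $\psi_m$ to $\varphi$.
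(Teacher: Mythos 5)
Your proposal is correct and takes essentially the same route as the paper: the paper likewise proves the third identity by commuting the $a^*(y_i)$ through the Weyl operator via property~\ref{item:1}, using the orthogonality condition~\eqref{eq:1} to kill every term containing a $\bar{\varphi}(y_i)$, and invoking the relation $\varphi^{\otimes_{n-m}}=d_{n,m}P_{n-m}C(\sqrt{n}\varphi)\Omega$ (which you derive from property~\ref{item:3} rather than merely recall). The differences are only cosmetic: you run the Weyl conjugation in the opposite direction (from the coherent-state side back to $\theta_{n,m}$) and you spell out the first two identities and the normalization bookkeeping that the paper dismisses as easy.
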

\begin{proof}
First two equalities are easy to prove. To prove the last one, recall the formula 
\begin{equation*}
  \varphi^{\otimes_{n-m}}=d_{n,m} P_{n-m} C(\sqrt{n}\varphi)\Omega\; .
\end{equation*}
We then obtain:
\begin{equation*}
  \begin{split}
    \theta_{n,m}=\frac{d_{n,m}}{\sqrt{m!}}\int\de{y_1}\dotsm\ide{y_m}\psi_m(y_1,\dotsc,y_m) P_{n} a^*(y_1)\dotsm a^*(y_m)P_{n-m} C(\sqrt{n}\varphi)\Omega\\
=\frac{d_{n,m}}{\sqrt{m!}}\int\de{y_1}\dotsm\ide{y_m}\psi_m(y_1,\dotsc,y_m) P_{n} C(\sqrt{n}\varphi)(a^*(y_1)+\sqrt{n}\bar{\varphi}(y_1))\dotsm \\
(a^*(y_m)+\sqrt{n}\bar{\varphi}(y_m))\Omega\; ;
  \end{split}
\end{equation*}
now all the terms containing at least a $\bar{\varphi}$ vanish by~\eqref{eq:1}.
\end{proof}

\begin{lemma}
  \label{sec:properties-theta_n-m-3}
Let $\theta_{n,m}$ satisfy definition~\ref{sec:properties-theta_n-m-1}, with $m\leq \sqrt{7+3n}-3$. Then $\forall\epsilon >0$:
\begin{equation*}
  \norm{(N+1)^{-(1/4 +\epsilon)}C^*(\sqrt{n}\varphi)\theta_{n,m}}\leq \frac{K(\epsilon)}{d_{n,m}}e^{m/2}\; ;
\end{equation*}
for some positive $K(\epsilon)$ that depends only on $\epsilon$.
\end{lemma}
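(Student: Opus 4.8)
The plan is to reduce the Fock‑space norm to a single‑mode (the $\varphi$‑direction) series and then estimate that series termwise. First I would conjugate the second representation of $\theta_{n,m}$ in lemma~\ref{sec:properties-theta_n-m-2} by $C^*(\sqrt n\varphi)$ and carry the Weyl operator past the creation operators using property~\ref{item:1}, so that each $a^*(y_j)$ becomes $a^*(y_j)+\sqrt n\bar\varphi(y_j)$. After integrating against $\psi_m$ every term carrying a factor $\bar\varphi$ is killed by the orthogonality relation~\eqref{eq:1}, exactly as in the proof of lemma~\ref{sec:properties-theta_n-m-2}, so that
\begin{equation*}
  C^*(\sqrt n\varphi)\theta_{n,m}=A^*\,C^*(\sqrt n\varphi)\varphi^{\otimes_{n-m}}\; ,\qquad A^*:=\frac{1}{\sqrt{m!}}\int\de{y_1}\dotsm\ide{y_m}\psi_m(y_1,\dotsc,y_m)\,a^*(y_1)\dotsm a^*(y_m)\; .
\end{equation*}
The operator $A^*$ satisfies $A^*\Omega=\psi_m$, commutes with $a^*(\varphi)$, and — since $a(\bar\varphi)\psi_m=0$ by~\eqref{eq:1} — the vectors $(a^*(\varphi))^k\psi_m/\sqrt{k!}$ are orthonormal eigenvectors of $N$ with eigenvalue $k+m$. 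Expanding the displaced number state in the $\varphi$‑mode as $C^*(\sqrt n\varphi)\varphi^{\otimes_{n-m}}=\sum_k c_k(a^*(\varphi))^k\Omega/\sqrt{k!}$ and applying $A^*$ gives $C^*(\sqrt n\varphi)\theta_{n,m}=\sum_k c_k(a^*(\varphi))^k\psi_m/\sqrt{k!}$ and hence the exact identity
\begin{equation*}
  \norm{(N+1)^{-(1/4+\epsilon)}C^*(\sqrt n\varphi)\theta_{n,m}}^2=\sum_{k=0}^\infty(k+m+1)^{-(1/2+2\epsilon)}\ass{c_k}^2\; .
\end{equation*}

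Next I would make the $c_k$ explicit. Property~\ref{item:3} gives $C^*(\sqrt n\varphi)\varphi^{\otimes_{n-m}}=\frac{e^{-n/2}}{\sqrt{(n-m)!}}(a^*(\varphi)+\sqrt n)^{n-m}e^{-\sqrt n a^*(\varphi)}\Omega$, and reading off the coefficient of $(a^*(\varphi))^k\Omega/\sqrt{k!}$ yields $\ass{c_k}=T_k/d_{n,m}$ with $T_k=\sqrt{k!}\,n^{k/2}\ass{S_k}$ and
\begin{equation*}
  k!\,S_k=\sum_{j=0}^k(-1)^j\binom{k}{j}\prod_{i=0}^{j-1}\Bigl(1-\frac{m+i}{n}\Bigr)\; ;
\end{equation*}
equivalently $T_k^2=k!\,n^{-k}\bigl(L_k^{(n-m-k)}(n)\bigr)^2$ in terms of Laguerre polynomials. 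In particular $S_0=1$, so $\ass{c_0}=1/d_{n,m}$, which is precisely the content of the definition of $d_{n,m}$ and already exhibits the announced order of magnitude.

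The heart of the matter is then the uniform‑in‑$n$ pointwise bound
\begin{equation*}
  T_k^2\leq K\,\frac{e^m}{\sqrt{k+1}}\qquad(0\leq k\leq n)\; ,
\end{equation*}
which I would establish by analysing the $k$‑th finite difference $k!\,S_k$ above — equivalently, by extracting $c_k$ from the explicit coherent‑state (Bargmann) generating function of $C^*(\sqrt n\varphi)\varphi^{\otimes_{n-m}}$ through a Cauchy integral and a saddle‑point estimate. The gain originates in the leading cancellation $\sum_j(-1)^j\binom{k}{j}=0$, the surviving contribution decaying at the square‑root edge rate $(k+1)^{-1/2}$, while the hypothesis $m\leq\sqrt{7+3n}-3$ is what keeps the correction factors $\prod_i(1-\tfrac{m+i}{n})$ — and thus the constant $e^m$ — under control; this Laguerre estimate, uniform in $n$ and with the sharp $k^{-1/2}$ decay, is the step I expect to be the main obstacle. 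Granting it, I would split the series at $k=n$: the low part sums at once to
\begin{equation*}
  \sum_{k\leq n}(k+m+1)^{-(1/2+2\epsilon)}T_k^2\leq K e^m\sum_{k=0}^\infty(k+1)^{-(1+2\epsilon)}=K(\epsilon)^2 e^m\; ,
\end{equation*}
the convergence of the last series being exactly what forces $\epsilon>0$ and makes $K(\epsilon)\to\infty$ as $\epsilon\to0$; the tail is harmless because $\sum_k T_k^2=d_{n,m}^2\sum_k\ass{c_k}^2=d_{n,m}^2$, whence
\begin{equation*}
  \sum_{k>n}(k+m+1)^{-(1/2+2\epsilon)}T_k^2\leq n^{-(1/2+2\epsilon)}\sum_k T_k^2=n^{-(1/2+2\epsilon)}d_{n,m}^2\; ,
\end{equation*}
which tends to $0$ (here $d_{n,m}^2=O(\sqrt n)$ throughout the admissible range $m\leq\sqrt{7+3n}-3$) and is absorbed into the constant. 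Dividing the resulting bound by $d_{n,m}^2$ and taking square roots gives the claimed estimate $K(\epsilon)\,d_{n,m}^{-1}e^{m/2}$.
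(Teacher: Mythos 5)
Your reduction is correct, and it is in substance the same as the paper's own proof: your coefficients $\ass{c_k}$ coincide with the component norms $A_k=\norm{\bigl(C^*(\sqrt{n}\varphi)\theta_{n,m}\bigr)_{k+m}}_{\mathscr{H}_{k+m}}$ that the paper computes by expanding the Weyl operator directly (your factorization $C^*(\sqrt{n}\varphi)\theta_{n,m}=A^*\,C^*(\sqrt{n}\varphi)\varphi^{\otimes_{n-m}}$ together with the orthonormal family $(a^*(\varphi))^k\psi_m/\sqrt{k!}$ is a cleaner derivation of the same decomposition); your identity $T_k^2=k!\,n^{-k}\bigl(L_k^{(n-m-k)}(n)\bigr)^2$ is exactly the paper's representation of $A_k$ through the Laguerre polynomial \eqref{eq:6}; and your summation of the series — convergence of $\sum_k(k+1)^{-(1+2\epsilon)}$ for the bulk, the total-norm identity $\sum_kT_k^2=d_{n,m}^2$ for the tail — is the paper's harmonic-number/zeta argument. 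The genuine gap is the step you yourself call the main obstacle: the uniform pointwise bound $T_k^2\le K\,e^m/\sqrt{k+1}$, equivalently $A_k^2\lesssim (k+1)^{-1/2}(n-m+1)^{-1/2}$, is asserted but never proved, and it is not a technical loose end — it is the entire analytic content of the lemma. Without it, the trivial facts $\ass{A_k}\le 1$ and $\sum_kA_k^2=1$ give only a bound of order one, with none of the decay $d_{n,m}^{-1}e^{m/2}$ (roughly $n^{-1/4}$ for moderate $m$) that the lemma asserts.

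The paper closes precisely this step by quoting a sharp inequality for Laguerre polynomials due to \citet{MR2168916}, stated as \eqref{eq:7}: applied with $\alpha=n-m-k$ and $x=n$, all factorial and exponential prefactors cancel and one obtains $A_k^2<4\sqrt{k(n-m+1)}/\bigl(4k(n-m+1)-(k-m+1)^2\bigr)$ for $2\le k\le n-m$, which yields the claimed $(k+1)^{-1/2}(n-m+1)^{-1/2}$ decay (the cases $k=0,1$ are computed by hand, since \eqref{eq:7} requires $k\ge 2$). Note also that the hypothesis $m\le\sqrt{7+3n}-3$ is exactly the condition $n\in(q^2,s^2)$ of validity of \eqref{eq:7}; in your sketch this hypothesis never enters quantitatively, which is a symptom of the missing step. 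The route you propose — square-root cancellation in the alternating sum $k!\,S_k$, or a Cauchy-integral/saddle-point analysis of the generating function — would have to produce this estimate uniformly in the three parameters $k,m,n$ throughout the oscillatory (Plancherel--Rotach) region $q^2<n<s^2$; the leading cancellation $\sum_j(-1)^j\binom{k}{j}=0$ is far from sufficient by itself, and carrying the saddle-point program to completion with the required uniformity would essentially amount to re-proving Krasikov's theorem. To repair the proof, either invoke such a sharp published bound, as the paper does, or supply the uniform asymptotics in full; as written, your argument reduces the lemma to its hardest point and leaves that point open.
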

\begin{proof}
  Consider $C^*(\sqrt{n}\varphi)\theta_{n,m}$; by definition of Weyl operators, we obtain $\forall j\leq n+m$:
  \begin{equation*}
    \begin{split}
      \Bigl( C^*(\sqrt{n}\varphi)\theta_{n,m}\Bigr)_j=e^{-n/2}\sum_{i=0}^j\frac{(-\sqrt{n})^{i}}{i!}\bigl(a^*(\varphi) \bigr)^i\frac{(\sqrt{n})^{n-j+i}}{(n-j+i)!}\bigl(a(\bar{\varphi}) \bigr)^{n-j+i}\theta_{n,m}\\
=e^{-n/2}(\sqrt{n})^{n-j}\sum_{i=0}^j(-1)^i\frac{n^{i}}{i!(n-j+i)!}\bigl(a^*(\varphi) \bigr)^i\bigl(a(\bar{\varphi}) \bigr)^{n-j+i}\theta_{n,m}\; .
    \end{split}
  \end{equation*}
By lemma~\ref{sec:properties-theta_n-m-2} (second equality) we write
\begin{equation*}
  \begin{split}
    \bigl(a^*(\varphi) \bigr)^i\bigl(a(\bar{\varphi}) \bigr)^{n-j+i}\theta_{n,m}=\frac{1}{\sqrt{(n-m)!m!}}\int\de{y_1}\dotsm\ide{y_m}\psi_m(y_1,\dotsc,y_m) \bigl(a^*(\varphi) \bigr)^i\bigl(a(\bar{\varphi}) \bigr)^{n-j+i}\\
\bigl(a^*(\varphi)\bigr)^{n-m}a^*(y_1)\dotsm a^*(y_m)\Omega\; ;
  \end{split}
\end{equation*}
so if $n-j+i>n-m$, i.e. $m-j+i>0$, then either $a(\bar{\varphi})$ acts on some $a^*(y)$, or on $\Omega$, yielding zero (by~\eqref{eq:1}, and definition of $\Omega$). Therefore we must have $i\leq j-m$, and consequently $j\geq m$. In such case we obtain
\begin{equation*}
  \bigl(a^*(\varphi) \bigr)^i\bigl(a(\bar{\varphi}) \bigr)^{n-j+i}\theta_{n,m}=\frac{\sqrt{(n-m)!(j-m)!}}{(j-m-i)!}\theta_{j-m,m}\; ;
\end{equation*}
this leads to
\begin{equation*}
  \Bigl( C^*(\sqrt{n}\varphi)\theta_{n,m}\Bigr)_j=\left\{
  \begin{aligned}
    &0\text{ if $j<m$}\\
    &\sum_{i=0}^{j-m}(-1)^in^{i}\frac{e^{-n/2}(\sqrt{n})^{n-j}\sqrt{(n-m)!(j-m)!}}{i!(j-m-i)!(n-j+i)!}\theta_{j-m,m}\text{ if $m\leq j\leq n$.}
  \end{aligned}\right .
\end{equation*}
So if we set $k=j-m$, for all $0\leq k\leq n-m$:
\begin{equation*}
  \Bigl( C^*(\sqrt{n}\varphi)\theta_{n,m}\Bigr)_{k+m}=e^{-n/2}(\sqrt{n})^{n-m-k}\sqrt{\frac{(n-m)!}{k!}}\biggl( \sum_{i=0}^k (-1)^i n^i\frac{k!}{i!(k-i)!(n-m-k+i)!}\biggr)\theta_{k,m}\; .
\end{equation*}
Now define
\begin{equation*}
  A_k:=\norm{\Bigl( C^*(\sqrt{n}\varphi)\theta_{n,m}\Bigr)_{k+m}}_{\mathscr{H}_{k+m}}\; ,
\end{equation*}
and the Laguerre polynomial
\begin{equation}\label{eq:6}
  L^{(\alpha)}_k(x):=\sum_{i=0}^k(-1)^ix^i\frac{(k+\alpha)!}{i!(k-i)!(\alpha+i)!} \; .
\end{equation}
Then we have
\begin{equation*}
  A_k=e^{-n/2}(\sqrt{n})^{n-m-k}\sqrt{\frac{k!}{(n-m)!}}\ass{L^{(n-m-k)}_k(n)}\text{ for all $0\leq k\leq n-m$.}
\end{equation*}
Actually, an explicit calculation shows that $A_0=1/d_{n,m}$, $A_1=m/\sqrt{n}d_{n,m}$. For $k\geq 2$, we use a sharp estimate for Laguerre polynomials obtained by \citet{MR2168916}: let $s=\sqrt{k+\alpha+1}+\sqrt{k}$, $q=\sqrt{k+\alpha+1}-\sqrt{k}$ and $r(x)=(x-q^2)(s^2-x)$; then for $\alpha>-1$, $k\geq 2$ and $x\in(q^2,s^2)$
\begin{equation}\label{eq:7}
  \ass{L^{(\alpha)}_k(x)}<\sqrt{\frac{(k+\alpha)!}{k!}}\sqrt{\frac{x(s^2-q^2)}{r(x)}}e^{x/2}x^{-(\alpha+1)/2}\; .
\end{equation}
In the case of $L^{(n-m-k)}_k(n)$ we obtain: $s=\sqrt{n-m+1}+\sqrt{k}$, $q=\sqrt{n-m+1}-\sqrt{k}$ and $r(n)=4k(n-m+1)-(k-m+1)^2$. The condition $n\in(q^2,s^2)$ implies that $m\leq \sqrt{7+3n}-3$. Therefore we obtain:
\begin{equation*}
  \begin{split}
      A_k<\sqrt{\frac{4\sqrt{k(n-m+1)}}{(2\sqrt{k(n-m+1)}+k-m+1)(2\sqrt{k(n-m+1)}-k+m-1)}}\; .
  \end{split}
\end{equation*}
Then, since $2\leq k\leq n-m$:
\begin{equation*}
  A_k<(\sqrt{2}/2-1/2)^{-1/2}(k+1)^{-1/4}(n-m+1)^{-1/4}\; .
\end{equation*}
Let $\delta>0$, and consider
\begin{equation*}
  \braket{C^*(\sqrt{n}\varphi)\theta_{n,m}}{(N+1)^{-\delta} C^*(\sqrt{n}\varphi)\theta_{n,m}}=\sum_{k=0}^\infty \frac{\ass{A_k}^2}{(k+m+1)^\delta}\; ,
\end{equation*}
since $\Bigl( C^*(\sqrt{n}\varphi)\theta_{n,m}\Bigr)_j=0$ for all $0\leq j< m$. Using the bound above, we obtain:
\begin{equation*}
  \begin{split}
    \braket{C^*(\sqrt{n}\varphi)\theta_{n,m}}{(N+1)^{-\delta} C^*(\sqrt{n}\varphi)\theta_{n,m}}\leq \biggl( \frac{1}{d_{n,m}^2}(1+m/\sqrt{n})+(\sqrt{2}/2-1/2)^{-1}\\(n-m+1)^{-1/2}\sum_{k=2}^{n-m}\frac{1}{(k+1)^{\delta+1/2}}\biggr)+\sum_{k=n-m+1}^\infty\frac{\ass{A_k}^2}{(k+1)^\delta}\; ;
  \end{split}
\end{equation*}
furthermore, since
\begin{equation*}
  \sum_{k=n-m+1}^\infty\ass{A_k}^2\leq \norm{C^*(\sqrt{n}\varphi)\theta_{n,m}}^2=1\; ,
\end{equation*}
we have
\begin{equation*}
  \begin{split}
    \braket{C^*(\sqrt{n}\varphi)\theta_{n,m}}{(N+1)^{-\delta} C^*(\sqrt{n}\varphi)\theta_{n,m}}\leq \biggl( \frac{1}{d_{n,m}^2}(1+m/\sqrt{n})+(\sqrt{2}/2-1/2)^{-1}\\(n-m+1)^{-1/2}\bigl(H_{\delta+1/2}(n-m+1)-1 \bigr)+(n-m+2)^{-\delta}\biggr)\; ,
  \end{split}
\end{equation*}
where $H_{\delta+1/2}(n-m+1)$ is a generalized harmonic number, whose limit for $n\to\infty$ converges only if $\delta+1/2>1$, i.e. $\delta>1/2$; and
\begin{equation*}
  \lim_{n\to\infty}H_{\delta+1/2}(n-m+1)=\zeta(\delta+1/2)\text{ ($\zeta$ is the Riemann zeta function).}
\end{equation*}
Then for all $\epsilon>0$ we can write
\begin{equation*}
  \begin{split}
    \braket{C^*(\sqrt{n}\varphi)\theta_{n,m}}{(N+1)^{-(1/4+\epsilon)} C^*(\sqrt{n}\varphi)\theta_{n,m}}\leq \biggl( \frac{1}{d_{n,m}^2}(1+m/\sqrt{n})+(\sqrt{2}/2-1/2)^{-1}\\\bigl(\zeta(1+\epsilon)-1 \bigr) (n-m+1)^{-1/2}+(n-m+2)^{-(1/2+2\epsilon)}\biggr)\; ;
  \end{split}
\end{equation*}
thus concluding the proof.
\end{proof}

\section{Mean field limit.}
\label{sec:mean-field-limit}

In this section we investigate the behaviour of the transition amplitude of a creation and an annihilation operator between time evolved $\theta_{n,m}$, $\Phi$ and $\Theta$ states, when $n$ is large. Hence we are able to prove theorems~\ref{sec:main-results-1} and~\ref{sec:main-results} on the convergence of one-particle reduced density matrices.

\subsection{Proof of Theorem~\ref{sec:main-results-1}.}
\label{sec:proof-theorem}

First of all we consider the transition amplitude $T_{\theta_{n,m}}(t)$. Let $\theta_{n,m}$ and $U(t)$ be as above; define
  \begin{equation*}
    T_{\theta_{n,m}}(t)\equiv T_{\theta_{n,m}}(t,x,y) := \braket{U(t)\theta_{n,m}}{a^*(x)a(y)U(t)\theta_{n,m}}\; .
  \end{equation*}
Then we can formulate the following proposition:
\begin{proposition}\label{sec:mean-field-limit-1}
Let $\varphi_t\in \mathscr{C}^0(\mathds{R},H^1(\mathds{R}^3))$ be the solution of Hartree equation with initial datum $\varphi\in H^1(\mathds{R}^3)$. Then $\forall t\in\mathds{R}$, $m\leq \sqrt{7+3n}-3$:
  \begin{equation*}
    \begin{split}
      \norm{T_{\theta_{n,m}}(t)-n\bar{\varphi}_t\varphi_t}_{L^2(\mathds{R}^6)}\leq K_1e^{K_2\ass{t}}(1+2\sqrt{n})e^{m/2}(m+1)^7\; ;
    \end{split}
  \end{equation*}
where the $K_i$, $i=1,2$, are positive and depend only on $\norm{\varphi}_{H^1}$.
\end{proposition}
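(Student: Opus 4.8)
The plan is to conjugate the quantum dynamics by the Weyl operator that implements the classical field, so that the transition amplitude is expressed through a fluctuation vector, and then to play the positive-moment dynamical bound of proposition~\ref{sec:class-quant-dynam-1} against the negative-moment combinatorial estimate of lemma~\ref{sec:properties-theta_n-m-3}. First I would insert $C(\sqrt n\varphi_t)C^*(\sqrt n\varphi_t)=\mathds{1}$ on both sides of $a^*(x)a(y)$ and apply property~\ref{item:1}, $C^*(\sqrt n\varphi_t)a(y)C(\sqrt n\varphi_t)=a(y)+\sqrt n\varphi_t(y)$ together with its adjoint. Introducing the centred vector $\xi(t):=C^*(\sqrt n\varphi_t)U(t)\theta_{n,m}=W(t,0)C^*(\sqrt n\varphi)\theta_{n,m}$, this yields
$$T_{\theta_{n,m}}(t)-n\bar\varphi_t\varphi_t=\braket{\xi(t)}{a^*(x)a(y)\xi(t)}+\sqrt n\,\varphi_t(y)\braket{\xi(t)}{a^*(x)\xi(t)}+\sqrt n\,\bar\varphi_t(x)\braket{\xi(t)}{a(y)\xi(t)},$$
the diagonal contribution $n\bar\varphi_t\varphi_t\,\norm{\xi(t)}^2$ cancelling the subtracted $n\bar\varphi_t\varphi_t$ by unitarity. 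To keep the combinatorics tractable I would carry $\theta_{n,m}$ in the projected–coherent form of lemma~\ref{sec:properties-theta_n-m-2}, $\theta_{n,m}=d_{n,m}P_nC(\sqrt n\varphi)\psi_m$, so that $\xi(t)=d_{n,m}\,\tilde P_n^{(t)}\,\Xi(t)$ with $\tilde P_n^{(t)}:=C^*(\sqrt n\varphi_t)P_nC(\sqrt n\varphi_t)$ and $\Xi(t):=W(t,0)\psi_m$. Since $\psi_m\in\mathscr{H}_m$ has $\norm{(N+1)^k\psi_m}=(m+1)^k$, proposition~\ref{sec:class-quant-dynam-1} controls every positive moment, $\norm{(N+1)^\delta\Xi(t)}\leq K_1(\delta)e^{K_2(\delta)\ass t}(m+1)^{\max\{2,2\delta+1\}}$, and this is the source of both the factor $e^{K_2\ass t}$ and the powers of $(m+1)$ in the final bound.

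I would then estimate the three remainders in $L^2(\mathds{R}^6)$ by the number-operator inequalities recalled among the properties of $N$, which convert $\int\de{x}\norm{a^\#(x)\phi}^2$ into $\norm{(N+1)^{1/2}\phi}^2$ and the analogous bilinear into $\norm{(N+1)\phi}^2$. The decisive move is to split each pairing as $\braket{(N+1)^{1/4+\epsilon}a^\#\Xi}{(N+1)^{-(1/4+\epsilon)}\tilde P_n\Xi}$: the first factor is bounded by the positive moments of $\Xi$ above (after commuting $(N+1)^{1/4+\epsilon}$ past the field operators), while the second is controlled by lemma~\ref{sec:properties-theta_n-m-3}, giving $\norm{(N+1)^{-(1/4+\epsilon)}\tilde P_n\Xi}=\norm{(N+1)^{-(1/4+\epsilon)}\xi}/d_{n,m}\lesssim e^{m/2}/d_{n,m}^2$. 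Because $d_{n,m}\sim(n-m)^{1/4}e^{m/2}$, the $e^{m/2}/d_{n,m}$ from lemma~\ref{sec:properties-theta_n-m-3} and the extra $d_{n,m}^{-1}$ from $\tilde P_n\Xi=\xi/d_{n,m}$ exactly consume the $d_{n,m}^2$ coming from the normalisation, leaving only $e^{m/2}$ and the accumulated powers of $(m+1)$. The two remainders that carry the classical amplitude $\sqrt n\varphi_t$ then retain one explicit factor $n^{1/2}$, whereas the purely quadratic term retains none; this is precisely the origin of the prefactor $(1+2\sqrt n)$, and the surviving powers assemble into $e^{m/2}(m+1)^7$.

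The main obstacle is exactly the step that forces this pairing. The centred vector $\xi(t)$ carries $O(n)$ particles — already at $t=0$ one computes $\braket{\xi(0)}{N\xi(0)}=2n$ — so a naive Cauchy–Schwarz on any single remainder overestimates it by a full factor $n^{1/2}$: the individual terms are genuinely of size $n$, and only their combination is of size $n^{1/2}$. The correct $O(n^{1/2})$ size can be recovered only through the smallness of the negative moment in lemma~\ref{sec:properties-theta_n-m-3}, but that lemma is stated at $t=0$, whereas what appears after conjugation is $\norm{(N+1)^{-(1/4+\epsilon)}\xi(t)}$, the negative moment of the fluctuation vector \emph{propagated to time $t$}. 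Proving the time-dependent refinement of lemma~\ref{sec:properties-theta_n-m-3} is the hard part, and it cannot be obtained from proposition~\ref{sec:class-quant-dynam-1} by duality, since the latter loses $(N+1)^2$ and is therefore useless on negative moments. Instead one has to control $\tfrac{\de}{\de t}\braket{\xi(t)}{(N+1)^{-(1/2+2\epsilon)}\xi(t)}$ directly, estimating the commutator of the fluctuation generator (the non-number-conserving $a^*a^*$ and $aa$ terms) with $(N+1)^{-(1/2+2\epsilon)}$ by the already available positive moments of $\Xi(t)$, and closing a Grönwall inequality; this is what ultimately produces the $e^{K_2\ass t}$ growth while preserving the $e^{m/2}(m+1)^7$ prefactor and the single power $n^{1/2}$.
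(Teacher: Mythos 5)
There is a genuine gap, and it sits exactly where you placed your ``hard part.'' Your symmetric conjugation produces remainders of the form $\braket{\xi(t)}{\Lambda\,\xi(t)}$ with $\xi(t)=W(t,0)C^*(\sqrt n\varphi)\theta_{n,m}$ on \emph{both} sides, so the only route to the $n^{1/2}$ smallness is a bound on $\norm{(N+1)^{-(1/4+\epsilon)}\xi(t)}$ — a time-$t$ version of lemma~\ref{sec:properties-theta_n-m-3}, which the paper neither states nor needs. Your intermediate pairing is also not a legitimate identity: to get $a^\#$ acting on $\Xi(t)=W(t,0)\psi_m$ alone you would have to slide the field operators past $\tilde P_n^{(t)}=C^*(\sqrt n\varphi_t)P_nC(\sqrt n\varphi_t)$, with which they do not commute (and the bookkeeping of $d_{n,m}$ factors in your sketch reflects this confusion). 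Finally, the proposed Grönwall closure is unsubstantiated: the commutator expectations live in $\xi(t)$, not in $\Xi(t)$, so they cannot be estimated ``by the already available positive moments of $\Xi(t)$.'' Estimating $\braket{\xi(t)}{[\,\cdot\,,(N+1)^{-(1/2+2\epsilon)}]\xi(t)}$ for the cubic ($n^{-1/2}a^*a^*a$-type) part of the fluctuation generator unavoidably costs a \emph{positive} moment of $\xi(t)$; such moments are of size $n^{\mathrm{power}}$ and cannot be propagated by \eqref{eq:9}, since that estimate loses $(N+1)^2$ and $\norm{(N+1)^2\xi(0)}=O(n^2)$ for this initial datum. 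So closing your differential inequality with $n$-uniform constants requires further propagation lemmas that neither you nor the paper provide.

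The paper's proof removes the difficulty before it arises by treating the two sides of $T_{\theta_{n,m}}(t)$ \emph{asymmetrically}. Substitute $\theta_{n,m}=d_{n,m}P_nC(\sqrt n\varphi)\psi_m$ (lemma~\ref{sec:properties-theta_n-m-2}) only in the ket and keep the bra as $U(t)\theta_{n,m}$; since the full operator $a^*(x)a(y)$ and $U(t)$ both commute with $P_n$, the projector is slid to the left and absorbed into the bra \emph{before} the Weyl pair $C(\sqrt n\varphi_t)C^*(\sqrt n\varphi_t)$ is inserted — this ordering is what your symmetric start destroys, because after expansion the pieces of $\Lambda$ no longer conserve particle number. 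Unitarity then moves $W(t,0)$ across the inner product, yielding
\begin{equation*}
T_{\theta_{n,m}}(t)-n\bar\varphi_t(x)\varphi_t(y)= d_{n,m}\braket{(N+1)^{-(1/4+\epsilon)}C^*(\sqrt n\varphi)\theta_{n,m}}{(N+1)^{1/4+\epsilon}\,W^*(t,0)\Lambda(x,y)W(t,0)\psi_m}\, ,
\end{equation*}
with $\Lambda(x,y)=a^*(x)a(y)+\sqrt n\bigl(a^*(x)\varphi_t(y)+a(y)\bar\varphi_t(x)\bigr)$. Now lemma~\ref{sec:properties-theta_n-m-3} is invoked only at $t=0$ (cancelling $d_{n,m}$ and leaving $Ke^{m/2}$), while \emph{all} time evolution acts on the $m$-particle vector $\psi_m$, whose every positive moment equals $(m+1)^k$: two applications of \eqref{eq:9} (once for $W^*(t,0)$ with $\delta=1/4+\epsilon$, once for $W(t,0)$ at the higher moment generated by $\Lambda$ and the number estimates of section~\ref{sec:main-results-this}) give the factor $K_1e^{K_2\ass{t}}(1+2\sqrt n)(m+1)^7$, and Riesz's lemma in $L^2(\mathds{R}^6)$ concludes. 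Your decomposition can never reach this configuration: once both sides are conjugated, the small vector $\psi_m$ never stands alone, and one side always carries the $O(n)$-particle fluctuation vector propagated in time — precisely the object no lemma in the paper controls.
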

\begin{proof}
    By lemma~\ref{sec:properties-theta_n-m-2} and property~\ref{item:1} of Weyl operators, we can write:
  \begin{equation*}
    \begin{split}
      T_{\theta_{n,m}}(t)=d_{n,m}\braket{U(t)\theta_{n,m}}{P_{n}C(\sqrt{n}\varphi_t)\bigl(a^*(x)+\sqrt{n}\bar{\varphi}_t(x)\bigr)\bigl(a(y)+\sqrt{n}\varphi_t(y)\bigr)C^*(\sqrt{n}\varphi_t)U(t)\\C(\sqrt{n}\varphi)\psi_m}\\
=n\bar{\varphi}_t(x)\varphi_t(y)+d_{n,m}\braket{(N+1)^{-(1/4+\epsilon)}C^*(\sqrt{n}\varphi)\theta_{n,m}}{(N+1)^{1/4+\epsilon}W^*(t,0)\\\bigl(a^*(x)a(y)+\sqrt{n}(a^*(x)\varphi_t(y)+a(y)\bar{\varphi}_t(x)) \bigr)W(t,0)\psi_m}\; .
    \end{split}
  \end{equation*}
Now, let $\xi\in L^2(\mathds{R}^6)$. Then by lemma~\ref{sec:properties-theta_n-m-3}, with $\epsilon\leq 1/4$:
\begin{equation*}
  \begin{split}
    \ass{\int\de{x}\ide{y}\bar{\xi}(x,y)\Bigl(T_{\theta_{n,m}}(t,x,y)-n\bar{\varphi}_t(x)\varphi_t(y) \Bigr)}\leq Ke^{m/2}\norm{(N+1)^{1/4+\epsilon}W^*(t,0)\int\de{x}\ide{y}\\\bar{\xi}(x,y)\bigl(a^*(x)a(y)+\sqrt{n}(a^*(x)\varphi_t(y)+a(y)\bar{\varphi}_t(x)) \bigr)W(t,0)\psi_m}\; .
\end{split}
\end{equation*}
Now using: \eqref{eq:9} two times and the properties of annihilation, creation and number operators listed in section~\ref{sec:main-results-this}, we obtain:
\begin{equation*}
  \begin{split}
    \ass{\int\de{x}\ide{y}\bar{\xi}(x,y)\Bigl(T_{\theta_{n,m}}(t,x,y)-n\bar{\varphi}_t(x)\varphi_t(y) \Bigr)}\leq K_1e^{K_2\ass{t}}(1+2\sqrt{n})e^{m/2}\norm{(N+1)^7\psi_m}\\\norm{\xi}_{L^2(\mathds{R}^6)}\; .
  \end{split}
\end{equation*}
The result is then proved applying Riesz's lemma on $L^2(\mathds{R}^6)$.
\end{proof}

The trace of $T_{\theta_{n,m}}(t)$ is defined as
  \begin{equation*}
  \Tr T_{\theta_{n,m}}(t)=\int\ide{x} T_{\theta_{n,m}}(t,x,x)=\braket{U(t)\theta_{n,m}}{N U(t)\theta_{n,m}}=n\; .
\end{equation*}
Consider now the one-particle reduced density matrix $\Tr_1\rho_{\theta_{n,m}}(t)$; its integral kernel can be written as
\begin{equation*}
  \Tr_1\rho_{\theta_{n,m}}(t,x;y)=\frac{T_{\theta_{n,m}}(t,y,x)}{\Tr T_{\theta_{n,m}}(t)}=\frac{T_{\theta_{n,m}}(t,y,x)}{n}\; .
\end{equation*}
Then, by proposition~\ref{sec:mean-field-limit-1} we immediately obtain
\begin{equation*}
  \norm{\Tr_1\rho_{\theta_{n,m}}(t)-\ketbra{\varphi_t}{\varphi_t}}_{HS}\leq K_1 e^{K_2\ass{t}}\Bigl(\frac{1}{n}+\frac{2}{\sqrt{n}}\Bigr)e^{m/2}(m+1)^7\; ,
\end{equation*}
where $\norm{\,\cdot\,}_{HS}$ stands for the Hilbert-Schmidt norm. Denote now $X=\Tr_1\rho_{\theta_{n,m}}(t)-\ketbra{\varphi_t}{\varphi_t}$. Since $\ketbra{\varphi_t}{\varphi_t}$ is a rank one projection, we have that $\Tr\ass{X} \leq 2\nnorm{X}$, where $\nnorm{\,\cdot\,}$ denotes the operator norm. Since $\nnorm{X}\leq \norm{X}_{HS}$, it follows that $\Tr\ass{X}\leq 2\norm{X}_{HS}$, hence the theorem is proved.

\subsection{Proof of Theorem~\ref{sec:main-results}.}
\label{sec:proof-theorem-1}

As discussed in remark~\ref{sec:main-results-3} of section~\ref{sec:quantum-system}, an $n$-dependent normalization has to be performed on linear superposition states. In the following lemma we study its behavior at large $n$.
\begin{lemma}\label{sec:proof-theor-refs}
Let $m$ be as in definition~\ref{sec:main-results-2}. Then for all $m\leq \ln n$, the following limits hold:
  \begin{gather*}
    \lim_{n\to\infty}\sum_{i,j\in\mathds{N}}\bar{\alpha}_i\alpha_j\braket{\varphi^{(i)}}{\varphi^{(j)}}^n=\sum_{i\in\mathds{N}}\ass{\alpha_i}^2\\
    \lim_{n\to\infty}\sum_{i,j\in\mathds{N}}\bar{\beta}_i\beta_j\braket{\theta_{n,m_i}^{(i)}}{\theta_{n,m_j}^{(j)}}=\sum_{i\in\mathds{N}}\ass{\beta_i}^2\\
    \begin{split}
      \lim_{n\to\infty}\sum_{i,j\in\mathds{N}}\bar{\gamma}_i\gamma_j\braket{C(\sqrt{n}\varphi^{(i)})\Omega}{C(\sqrt{n}\varphi^{(j)})\Omega}=\lim_{n\to\infty}\sum_{i,j\in\mathds{N}}\bar{\gamma}_i\gamma_j e^{in\Im\braket{\varphi^{(i)}}{\varphi^{(j)}}}e^{-n\norm{\varphi^{(j)}-\varphi^{(i)}}^2/2}\\=\sum_{i\in\mathds{N}}\ass{\gamma_i}^2\; .
    \end{split}
  \end{gather*}
Hence $\ass{\alpha_i(n)}$, $\ass{\beta_i(n)}$ and $\ass{\gamma_i(n)}$ are uniformly bounded in $n$.
\end{lemma}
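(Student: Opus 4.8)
The plan is to treat all three limits by the same mechanism: separate the diagonal from the off-diagonal contributions, show that each off-diagonal overlap vanishes as $n\to\infty$ while each diagonal overlap equals $1$, and then exchange the limit with the double summation by dominated convergence for series. The uniform bound enabling this exchange is that every overlap appearing is the inner product of two unit vectors, so by Cauchy--Schwarz its modulus is $\leq 1$; since $(\alpha_i),(\beta_i),(\gamma_i)\in l^1$, the double sums are dominated termwise by $\ass{\,\cdot\,_i}\ass{\,\cdot\,_j}$, with finite total $\norm{(\,\cdot\,_i)_{i\in\mathds{N}}}_{l^1}^2$, independently of $n$. Once the exchange is justified, only the diagonal survives and each limit equals $\sum_{i\in\mathds{N}}\ass{\,\cdot\,_i}^2$.

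For the $\alpha$-sum the diagonal terms are $\braket{\varphi^{(i)}}{\varphi^{(i)}}^n=1$, while for $i\neq j$ the linear independence of the unit vectors $\varphi^{(i)},\varphi^{(j)}$ gives the strict inequality $\ass{\braket{\varphi^{(i)}}{\varphi^{(j)}}}<1$ in Cauchy--Schwarz, so $\braket{\varphi^{(i)}}{\varphi^{(j)}}^n\to 0$. For the $\gamma$-sum I would first use properties~\ref{item:2} and~\ref{item:3} of the Weyl operators to obtain the explicit overlap $\braket{C(\sqrt{n}\varphi^{(i)})\Omega}{C(\sqrt{n}\varphi^{(j)})\Omega}=e^{in\Im\braket{\varphi^{(i)}}{\varphi^{(j)}}}e^{-n\norm{\varphi^{(j)}-\varphi^{(i)}}^2/2}$ recorded in the statement; its modulus is $1$ on the diagonal and, since assumption~\ref{item:9} forces $\norm{\varphi^{(j)}-\varphi^{(i)}}>0$ for $i\neq j$, decays as $e^{-n\norm{\varphi^{(j)}-\varphi^{(i)}}^2/2}\to 0$.

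The $\beta$-sum is where the real work lies, since the overlap of two $\theta$-states is not explicit; this is the main obstacle. Here I would expand $\braket{\theta_{n,m_i}^{(i)}}{\theta_{n,m_j}^{(j)}}$ using the creation-operator representation of lemma~\ref{sec:properties-theta_n-m-2}, and count the surviving contractions among the factors $a^*(\varphi^{(i)})$, $a^*(\varphi^{(j)})$ and those carrying $\psi_{m_i}^{(i)},\psi_{m_j}^{(j)}$; the orthogonality relation~\eqref{eq:1} kills every contraction pairing a $\psi$-leg with its own condensate, and the combinatorial bookkeeping is expected to yield
\begin{equation*}
  \ass{\braket{\theta_{n,m_i}^{(i)}}{\theta_{n,m_j}^{(j)}}}\leq (m+1)(m!)^2 n^m\ass{\braket{\varphi^{(i)}}{\varphi^{(j)}}}^{n-2m}\; .
\end{equation*}
For $m\leq\ln n$ the prefactor $(m+1)(m!)^2n^m$ grows only subexponentially in $n$, whereas $\ass{\braket{\varphi^{(i)}}{\varphi^{(j)}}}^{n-2m}$ decays exponentially (the $\varphi^{(i)}$ being distinct), so every off-diagonal term tends to $0$, while the diagonal terms are $1$ by normalization. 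The uniform Cauchy--Schwarz bound $\leq 1$ again supplies the dominating sequence, and dominated convergence finishes the limit.

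Finally, the uniform boundedness of $\ass{\alpha_i(n)},\ass{\beta_i(n)},\ass{\gamma_i(n)}$ follows from remark~\ref{sec:main-results-3}: each normalization denominator there is the squared norm of an unnormalized superposition, hence nonnegative, and the limits just computed show it tends to $\norm{(\,\cdot\,_i)_{i\in\mathds{N}}}_{l^2}^2>0$. Consequently it is bounded below by a fixed positive constant for all large $n$; for the finitely many remaining $n$ it is strictly positive, because the superposed vectors are linearly independent under assumptions~\ref{item:8}--\ref{item:9}. A uniform positive lower bound on the denominators then yields $\ass{\alpha_i(n)}\leq K_\alpha\ass{\alpha_i}$, and likewise for $\beta$ and $\gamma$, with constants depending only on the respective $l^1$-norms.
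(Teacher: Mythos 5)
Your proposal is correct and follows essentially the same route as the paper: dominated convergence justified by the Cauchy--Schwarz bound $\leq 1$ on overlaps of unit vectors together with the $l^1$ assumption, diagonal terms equal to $1$, and off-diagonal terms vanishing via strict Cauchy--Schwarz for the $\alpha$-sum, the bound $(m+1)(m!)^2 n^m\ass{\braket{\varphi^{(i)}}{\varphi^{(j)}}}^{n-2m}$ (which the paper likewise invokes without detailed proof) for the $\beta$-sum, and the explicit Weyl-operator overlap $e^{in\Im\braket{\varphi^{(i)}}{\varphi^{(j)}}}e^{-n\norm{\varphi^{(j)}-\varphi^{(i)}}^2/2}$ for the $\gamma$-sum. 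Your closing argument for the uniform boundedness of $\ass{\alpha_i(n)}$, $\ass{\beta_i(n)}$, $\ass{\gamma_i(n)}$ is a reasonable elaboration of what the paper leaves implicit.
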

\begin{proof}
We can take the limit $n\to\infty$ for each term of the summation by the dominated convergence theorem, because the absolute values of the $n$-dependent scalar products of vectors are bounded by $1$, and the suites $(\alpha_i)$, $(\beta_i)$ and $(\gamma_i)$ are in $l^1$. Then the first limit is obtained because $\braket{\varphi^{(i)}}{\varphi^{(i)}}=1$ for all $i\in\mathds{N}$, while, by definition~\ref{sec:main-results-2}, $\ass{\braket{\varphi^{(i)}}{\varphi^{(j)}}}<1$ for all $i\neq j$. The second limit is analogous, since $\braket{\theta_{n,m_i}^{(i)}}{\theta_{n,m_i}^{(i)}}=1$, while for all $m\leq \ln n$ we have $\ass{\braket{\theta_{n,m_i}^{(i)}}{\theta_{n,m_j}^{(j)}}}\leq K n^{2(\ln n+1)}\ass{\braket{\varphi^{(i)}}{\varphi^{(j)}}}^{n-2\ln n}\to 0$ when $n\to\infty$. The last one is obtained because, using the properties of Weyl operators, we have:
\begin{equation*}
  \braket{C(\sqrt{n}\varphi^{(i)})\Omega}{C(\sqrt{n}\varphi^{(j)})\Omega}=\braket{\Omega}{C(-\sqrt{n}\varphi^{(i)})C(\sqrt{n}\varphi^{(j)})\Omega}=e^{in\Im\braket{\varphi^{(i)}}{\varphi^{(j)}}}e^{-n\norm{\varphi^{(j)}-\varphi^{(i)}}^2/2}\; .
\end{equation*}
\end{proof}
The proof of theorem~\ref{sec:main-results} is similar to the one above for theorem~\ref{sec:main-results-1}; we carry it out explicitly only for $\Phi$ vectors, the other case being analogous. Define the transition amplitude
\begin{equation*}
  T_\Phi(t,x,y)=\braket{\Phi}{e^{itH}a^*(x)a(y)e^{-itH}\Phi}\; .
\end{equation*}
Let $\xi\in L^2(\mathds{R}^6)$. Then using property~\ref{item:7} of the number operator $N$ we obtain
\begin{equation*}
  \ass{\braket{\xi}{T_\Phi(t)}_{L^2(\mathds{R}^6)}}\leq (n+1)\norm{\xi}_{L^2(\mathds{R}^6)}\; ,
\end{equation*}
then $T_\Phi(t)\in L^2(\mathds{R}^6)$ for all $n\in\mathds{N}$. Also, since $\norm{\Phi}=1$ by definition~\ref{sec:main-results-2}, we have that
\begin{equation*}
  \Tr T_\Phi(t)=\int\ide{x}T_\Phi(t,x,x)=n\; .
\end{equation*}
Furthermore the series
\begin{equation*}
  \int\de{x}\ide{y}\bar{\xi}(x,y)\sum_{i\in\mathds{N}}\ass{\alpha_i(n)}^2\bar{\varphi}_t^{(i)}(x)\varphi_t^{(i)}(y)=\sum_{i\in\mathds{N}}\ass{\alpha_i(n)}^2\braket{\xi}{\bar{\varphi}_t^{(i)}\varphi_t^{(i)}}_{L^2(\mathds{R}^6)}
\end{equation*}
is absolutely convergent under the hypotheses of definition~\ref{sec:main-results-2}:
\begin{equation*}
  \sum_{i\in\mathds{N}}\ass{\alpha_i(n)}^2\ass{\braket{\xi}{\bar{\varphi}_t^{(i)}\varphi_t^{(i)}}_{L^2(\mathds{R}^6)}}\leq \norm{\xi}_{L^2(\mathds{R}^6)}\norm{(\alpha_i(n))_{i\in\mathds{N}}}^2_{l^2}\; .
\end{equation*}
Therefore we can write, with $d_n=\sqrt{n!}/\exp(-n/2)n^{n/2}$,
\begin{equation*}
  \begin{split}
    \ass{\int\de{x}\ide{y}\bar{\xi}(x,y)\Bigl(T_\Phi(t,x,y)-n\sum_{i\in\mathds{N}}\ass{\alpha_i(n)}^2\bar{\varphi}_t^{(i)}(x)\varphi_t^{(i)}(y) \Bigr)}\leq \sum_{i\in\mathds{N}}d_n\ass{\alpha_i(n)}^2\Biggl\lvert\braket{C^*(\sqrt{n}\varphi_0^{(i)})(\varphi^{(i)})^{\otimes_n}}{\\W^*(t,0)
\int\de{x}\ide{y}\bar{\xi}(x,y)\Bigl( \sqrt{n}(\bar{\varphi}_t^{(i)}a(y)+\varphi_t^{(i)}(y)a^*(x))+a^*(x)a(y)\Bigr)W(t,0)\Omega}\Biggr\rvert\\
+2\sum_{i<j}\ass{\bar{\alpha}_i(n) \alpha_j(n)}\Biggl(n\ass{\braket{\varphi^{(i)}}{\varphi^{(j)}}}^n\ass{\braket{\xi}{\bar{\varphi}_t^{(j)}\varphi_t^{(j)}}_{L^2(\mathds{R}^6)}}+ d_n\Biggl\lvert\braket{C^*(\sqrt{n}\varphi_0^{(j)})(\varphi^{(i)})^{\otimes_n}}{W^*(t,0)\\
\int\de{x}\ide{y}\bar{\xi}(x,y)\Bigl( \sqrt{n}(\bar{\varphi}_t^{(j)}a(y)+\varphi_t^{(j)}(y)a^*(x))+a^*(x)a(y)\Bigr)W(t,0)\Omega}\Biggr\rvert\Biggr)\\
\leq \Biggl(K_1e^{K_2\ass{t}}(\sqrt{n}+1)\Bigl(\sum_{i\in\mathds{N}}\ass{\alpha_i(n)}^2+d_n\sum_{i<j}\ass{\bar{\alpha}_i(n) \alpha_j(n)} \Bigr)+2 n\sum_{i<j}\ass{\bar{\alpha}_i(n) \alpha_j(n)}\\\ass{\braket{\varphi^{(i)}}{\varphi^{(j)}}}^n\biggr)\norm{\xi}_{L^2(\mathds{R}^6)}\; .
  \end{split}
\end{equation*}
By Riesz's lemma, keeping in mind that, for large $n$, $d_n\sim n^{1/4}$, we can write
\begin{equation*}\begin{split}
  \norm{T_\phi(t)-n\sum_{i\in\mathds{N}}\ass{\alpha_i(n)}^2\varphi_t^{(i)}\varphi_t^{(i)}}_{L^2(\mathds{R}^6)}\leq 2n\sum_{i<j}\ass{\bar{\alpha}_i(n)\alpha_j(n)}\ass{\braket{\varphi^{(i)}}{\varphi^{(j)}}}^n + K_2 e^{K_3\ass{t}}n^{3/4}\\\Bigl(\norm{(\alpha_i(n))_{i\in\mathds{N}}}_{l^2} +\sum_{i<j}\ass{\bar{\alpha}_i(n)\alpha_j(n)}\Bigr)\; .
\end{split}\end{equation*}
Dividing by $\Tr T_\Phi(t)=n$ we obtain the corresponding $L^2(\mathds{R}^6)$-bound for the integral kernel of the one-particle reduced density matrix, hence the bound in Hilbert-Schmidt norm. Since $\alpha_i(n)$ are uniformly bounded, we obtain:
\begin{equation}\label{eq:13}
  \begin{split}
    \norm{\Tr_1\rho_{\Phi}(t)-\sum_{i\in\mathds{N}}\ass{\alpha_i(n)}^2\ketbra{\varphi_t^{(i)}}{\varphi_t^{(i)}}}_{HS}\leq K_1\sum_{i<j}\ass{\bar{\alpha}_i(n)\alpha_j(n)}\ass{\braket{\varphi^{(i)}}{\varphi^{(j)}}}^n + K_2 e^{K_3\ass{t}}\frac{1}{n^{1/4}}\; .
  \end{split}
\end{equation}
Consider now $\norm{\Tr_1\rho_{\Phi}(t)-\norm{(\alpha_i)_{i\in\mathds{N}}}^{-2}_{l^2}\sum_{i\in\mathds{N}}\ass{\alpha_i}^2\ketbra{\varphi_t^{(i)}}{\varphi_t^{(i)}}}_{HS}$, we can write
\begin{equation*}
  \begin{split}
    \norm{\Tr_1\rho_{\Phi}(t)-\norm{(\alpha_i)_{i\in\mathds{N}}}^{-2}_{l^2}\sum_{i\in\mathds{N}}\ass{\alpha_i}^2\ketbra{\varphi_t^{(i)}}{\varphi_t^{(i)}}}_{HS}\leq \norm{\sum_{i\in\mathds{N}}\Bigl(\ass{\alpha_i(n)}^2-\norm{(\alpha_i)_{i\in\mathds{N}}}^{-2}_{l^2}\ass{\alpha_i}^2\Bigr)\ketbra{\varphi_t^{(i)}}{\varphi_t^{(i)}}}_{HS} \\+ \norm{\Tr_1\rho_{\Phi}(t)-\sum_{i\in\mathds{N}}\ass{\alpha_i(n)}^2\ketbra{\varphi_t^{(i)}}{\varphi_t^{(i)}}}_{HS}\; .
  \end{split}
\end{equation*}
The first term satisfies the following inequality:
\begin{equation*}
  \begin{split}
    \norm{\sum_{i\in\mathds{N}}\Bigl(\ass{\alpha_i(n)}^2-\norm{(\alpha_i)_{i\in\mathds{N}}}^{-2}_{l^2}\ass{\alpha_i}^2\Bigr)\ketbra{\varphi_t^{(i)}}{\varphi_t^{(i)}}}_{HS}\leq \sum_{i<j}\ass{\bar{\alpha}_i(n)\alpha_j(n)}\ass{\braket{\varphi^{(i)}}{\varphi^{(j)}}}^n\; .
  \end{split}
\end{equation*}

\begin{acknowledgments}
  The author would like to thank Professor Giorgio Velo, for having introduced him to the use of $\theta_{n,m}$ states for mean field limits, and all other interesting discussions.
\end{acknowledgments}
\bibliography{c-e_st_sup}
\end{document}